\theoremstyle{definition}
\newtheorem{definition}{Definition}
\theoremstyle{remark}
\newtheorem{remark}[definition]{Remark}
\newtheorem{example}[definition]{Example}
\newtheoremstyle{mytheorem}{0.5cm}{0.2cm}{\slshape}{ }{\bfseries}{.}{ }{}
\theoremstyle{mytheorem}
\newtheorem{theorem}[definition]{Theorem}
\newtheorem{proposition}[definition]{Proposition}
\newtheorem{corollary}[definition]{Corollary}
\newcommand{\E}{\mathbf{E}}
\newcommand{\R}{\mathbb{R}}
\newcommand{\HH}{\mathbb{H}}
\DeclareMathOperator{\one}{{1\hspace*{-0.55ex}I}}
\DeclareMathOperator{\onev}{{\mathbf{1}}}
\DeclareMathOperator{\WSI}{WS}
\newcommand{\Q}{\mathbf{Q}}
\newcommand{\sH}{\mathcal{H}}
\newcommand{\fF}{\mathfrak{F}}
\newcommand{\salg}{\fF}
\renewcommand{\phi}{\varphi}
\renewcommand{\kappa}{\varkappa}
\newcommand{\imagi}{\boldsymbol{\imath}}
\newcommand{\kappat}{\tilde{\kappa}}
\newcommand{\gammaQ}{\gamma}
\newcommand{\nuQ}{\nu}
\newcommand{\fb}{f_{\mathrm{b}}}
\newcommand{\fbo}{f_{\mathrm{b}}^{\mathrm{o}}}
\newcommand{\thf}[1][1]{\frac{#1}{2}}
\newlength{\querylen}
\numberwithin{equation}{section}
\numberwithin{definition}{section}
\begin{document}
\bibliographystyle{plain}

\title{
  Exchangeability type properties of asset prices
\footnote{Supported by the
    Swiss National Science Foundation Grants Nr.  200021-117606 and 200021-126503.}}
\date{}

\author{Ilya Molchanov and Michael Schmutz\\
  \small Department of Mathematical Statistics and Actuarial Science,\\
  \small University of Bern, Sidlerstrasse 5, 3012 Bern, Switzerland\\
  \small (e-mails: ilya.molchanov@stat.unibe.ch, michael.schmutz@stat.unibe.ch)
}

\maketitle

\vspace{-5mm}
\begin{abstract}
  Let $\eta=(\eta_1,\dots,\eta_n)$ be a positive random vector. If its
  coordinates $\eta_i$ and $\eta_j$ are exchangeable, i.e.\ the
  distribution of $\eta$ is invariant with respect to the swap
  $\pi_{ij}$ of its $i$th and $j$th coordinates, then $\E f(\eta)=\E
  f(\pi_{ij}\eta)$ for all integrable functions $f$. This paper
  studies integrable random vectors that satisfy this identity for a
  particular family of functions $f$, namely those which can be
  written as the positive part of the scalar product $\langle
  u,\eta\rangle$ with varying weights $u$. In finance such functions
  represent payoffs from exchange options with $\eta$ being the random
  part of price changes, while from the geometric point of view they
  determine the support function of the so-called zonoid of $\eta$.
  If the expected values of such payoffs are $\pi_{ij}$-invariant, we
  say that $\eta$ is $ij$-swap-invariant. A full characterisation of
  the swap-invariance property and its relationship to the symmetries
  of expected payoffs of basket options are obtained. The first of
  these results relies on a characterisation theorem for integrable
  positive random vectors with equal zonoids. A particular attention
  is devoted to the case of asset prices driven by L\'evy processes.
  Based on this, concrete semi-static hedging techniques for
  multi-asset barrier options, such as weighted barrier swap options,
  weighted barrier quanto-swap options or certain weighted barrier
  spread options are suggested.

  \medskip

  \noindent

  \emph{Keywords}: barrier option; duality principle; exchangeability; homogeneous
  functions; L\'evy process; multi-asset option; payoff; put-call symmetry; semi-static hedging;
  symmetry; swap-invariance; zonoids

  \noindent{AMS Classifications}: 60E05; 60G51; 91G20
\end{abstract}

\section{Introduction}
\label{sec:introduction}

The classic univariate European \emph{put-call symmetry} property,
also known as Bates' rule from~\cite{bat97}, relates certain calls and
puts in the \emph{same} market, see e.g.~\cite{bat91,car94,faj:mor06b}
and more recently~\cite{car:lee08,teh09}. This symmetry property of an
integrable random variable $\eta$ can be expressed using expected
payoffs from plain vanilla options as
\begin{equation}
  \label{eq:put-call}
  \E(F\eta-k)_+=\E(F-k\eta)_+
\end{equation}
for every strike $k\geq 0$, with $F$ being the forward price, so
that the terminal asset price in the one-period setting is $F\eta$
(in order that the discounted expectations can be interpreted as
arbitrage-free prices, they have to be taken with respect to a
martingale measure), see \cite{car:lee08,mol:sch10}. In cases with
vanishing carrying costs the put-call symmetry makes it possible
to replace at certain times a call option with equally valued puts
in order to design so-called semi-static hedges for barrier
options. In cases of non-vanishing carrying costs semi-static
hedges can be constructed on the basis of a very closely related
property, called quasi-self-duality, being briefly discussed in
Section~\ref{sec:quasi-swap}, see also~\cite{car:lee08,mol:sch10}.

Following Carr and Lee~\cite{car:lee08}, \emph{semi-static
hedging} is the replication of contracts by trading European-style
claims at no more than two times after inception. In the
\emph{single asset} case such semi-static hedging strategies have
been analysed extensively in recent years,
see~e.g.~\cite{and:and:eli02,and01,bow:car94,car:cho97,car:ell:gup98}
and more recently~\cite{car:lee08}.

Interestingly, also the \emph{duality principle} in option pricing
traces some of its roots to the same papers as put-call symmetry
results, see e.g.~\cite{bat91,bat97,car94,grab83}. The power of
duality lies in the possibility to reduce the complexity of
valuation problems by relating them to easier problems in the
\emph{dual markets}. For a presentation of this
principle in a general univariate exponential semimartingale
setting see~\cite{eber:pap:shir08}, for bivariate L\'evy markets
see~\cite{faj:mor06c}, for multivariate semi-martingale extensions
(with various dual-markets) see~\cite{eber:pap:shir08b}. The
symmetry property then appears if the original and certain dual
markets coincide, that motivates the name \emph{self-dual} chosen
in~\cite{mol:sch10} for distributions that coincide with their
duals.

In the multi-asset setting $\eta=(\eta_1,\dots,\eta_n)$ is an
$n$-dimensional random vector with positive coordinates such that the
price $S_{Ti}$ of the $i$th asset at time $T>0$ equals $F_i\eta_i$,
where in a risk-neutral world, $F_i$ stands for the corresponding
theoretical \emph{forward price} and $\eta_i$ for the random part of
the price change of the $i$th asset. We denote this shortly as
\begin{displaymath}
  S_T=(S_{T1},\dots,S_{Tn})=(F_1\eta_1,\dots,F_n\eta_n)=F\circ\eta\,.
\end{displaymath}
Furthermore, assume that $\Q$ is a probability measure that makes
$\eta$ integrable. For later applications to barrier options we
extra assume that $\Q$ is a martingale measure that is consistent
with market option prices. The expectation with respect to $\Q$ is
denoted by $\E$ without subscript. For further simplicity of
notation, we do not write time $T$ as a subscript of $\eta$ and
incorporate for a moment the forward prices $F_i$, $i=1,\dots,n$,
into payoff functions. In our context payoff functions are
measurable functions $f:(0,\infty)^n\mapsto\R_+$.

Molchanov and Schmutz~\cite{mol:sch10} studied symmetries of expected
payoffs from European \emph{basket options} defined as
\begin{equation}
  \label{eq:fb}
  \fb(u_0,u_1,\dots,u_n)
  =\Big(\sum_{l=1}^n u_l\eta_l+u_0\Big)_+,\quad
  u_0, u_1,\dots,u_n\in\R\,.
\end{equation}
When writing the ``weights'' of a basket option together with its
strike as a vector, we number the coordinates of the obtained
$(n+1)$-dimensional vectors as $0,1,\dots,n$ and denote these
vectors as $(u_0,u)$ for $u_0\in\R$ and $u\in\R^n$ or as
$(u_0,u_1,\dots,u_n)=(u_0,u)\in\R^{n+1}$. In the following we
consider vectors as rows or columns depending on the situation.

Since $\fb(u_0,u)$ can be understood as a plain vanilla option on the
scalar product $\langle u,\eta\rangle$ with strike $u_0$, the
corresponding expected payoffs determine uniquely the distribution of
$\langle u,\eta\rangle$, see e.g.~\cite{bre:lit78,ros76}, and
thereupon also determine the distribution of $\eta$ as the following
result (which holds also for not necessarily positive $\eta$) shows.
Note that the expected values of $\fb(u_0,u)$ considered a function of
$(u_0,u)$ constitute the support function of an $(n+1)$-dimensional
convex body called the \emph{lift zonoid} of $\eta$, see \cite{mos02}.

\begin{theorem}[see e.g.~\cite{mol:sch10,mos02}]
  \label{thr:basket}
  The expected values $\E \fb(u_0,u_1,\dots,u_n)$ for all $u_0\in\R$ and
  $u\in\R^n$ determine uniquely the distribution $\Q$ of an
  integrable random vector $\eta$.
\end{theorem}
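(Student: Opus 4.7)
The plan is to reduce the multivariate claim to two classical one-dimensional facts: the Breeden–Litzenberger/Ross recovery of a marginal from call prices, and the Cramér–Wold device.

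First I would fix an arbitrary direction $u\in\R^n$ and let $u_0$ range over $\R$. Then $\fb(u_0,u)=(\langle u,\eta\rangle-(-u_0))_+$ is precisely a plain vanilla call payoff written on the real-valued random variable $X_u:=\langle u,\eta\rangle$ with strike $-u_0$. Since $\eta$ is $\Q$-integrable, so is $X_u$, and the family $\{\E(X_u-k)_+:k\in\R\}$ is well defined. By the Breeden–Litzenberger argument (in the forms of~\cite{bre:lit78} and \cite{ros76}, which are already invoked in the excerpt), these call values determine the distribution of $X_u$ uniquely. Concretely, the function $k\mapsto \E(X_u-k)_+$ is convex, and its right second derivative in the distributional sense is the law of $X_u$.

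Second, I would invoke the Cramér–Wold theorem: a probability measure on $\R^n$ is uniquely determined by the collection of one-dimensional distributions of all linear functionals $\langle u,\cdot\rangle$, $u\in\R^n$. Combining with the previous step, the data $\{\E\fb(u_0,u):u_0\in\R,\;u\in\R^n\}$ determines the law of $X_u$ for every $u$, and hence the law $\Q$ of $\eta$. Equivalently, one can read off the characteristic function of $\eta$ at any $u\in\R^n$ from the characteristic function of $X_u$ at $1$, which is itself determined by the law of $X_u$.

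There is no real obstacle: integrability of $\eta$ is what is needed to guarantee that $\E(X_u-k)_+<\infty$ for all $u,k$, so that the one-dimensional inversion step is applicable. The only thing to watch is that we are using values $u_0\in\R$ (including negative strikes in the original parametrisation) so that $k=-u_0$ sweeps out all of $\R$, which is exactly the hypothesis of the theorem.
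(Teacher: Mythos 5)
Your argument is correct and matches the approach the paper itself sketches just before the theorem statement: recover the law of each $\langle u,\eta\rangle$ from $u_0\mapsto\E\fb(u_0,u)$ via Breeden--Litzenberger/Ross, then conclude by Cram\'er--Wold. The integrability remark and the observation that $k=-u_0$ sweeps all of $\R$ are exactly the points that make the reduction go through.
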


Although it is possible to weaken the statement of
Theorem~\ref{thr:basket} by considering only one fixed $u_0\neq 0$,
the uniqueness does not hold any more if $u_0=0$, i.e.\ for the
payoffs from swap (or exchange) options defined as
\begin{equation}
  \label{eq:fbo}
  \fbo(u)=\Big(\sum_{l=1}^n u_l\eta_l\Big)_+=(\langle
  u,\eta\rangle)_+\,,
  \quad u\in\R^n\,.
\end{equation}

The random vector $\eta$ with positive coordinates is called
\emph{self-dual} with respect to the $i$th numeraire if $\eta$ is
integrable and $\E \fb(u_0,u_1,\dots,u_n)$ as a function of
$(u_0,u)$ is invariant with respect to the permutation of $u_0$
and the $i$th coordinate of $u$, see~\cite[Sec.~2]{mol:sch10}. A
\emph{jointly} self-dual $\eta$ satisfies this property for all
numeraires $i=1,\dots,n$, so that the expected payoff $\E
\fb(u_0,u_1,\dots,u_n)$ becomes symmetric in all its $(n+1)$
arguments. This joint self-duality property implies that $\eta$ is
\emph{exchangeable}, i.e. $(\eta_1,\dots,\eta_n)$ coincides in
distribution with $(\eta_{l_1},\dots,\eta_{l_n})$ for each
permutation of its components.  The exchangeability property is
well studied in probability theory, see e.g.~\cite{ald85}
or~\cite{kal05s} and the literature cited therein. It is
also known from \cite[Sec.~3]{mol:sch10} that the exchangeability
property is strictly weaker than the joint self-duality.

While the self-duality property is crucial to switch between put and
call options as in (\ref{eq:put-call}), hedges for some other
derivatives do not rely on the self-duality assumption. In particular,
this relates to derivatives with the payoff function (\ref{eq:fbo}).
For example, one can require that
\begin{equation}
  \label{eq:two-swap}
  \E (u_1\eta_1+u_2\eta_2)_+=\E (u_1\eta_2+u_2\eta_1)_+
\end{equation}
for every $(u_1,u_2)\in\R^2$ in the two-asset case. This
swap-invariance property is weaker than the exchangeability of $\eta$,
e.g.\ it will be shown later that in the risk-neutral setting each
two-dimensional log-normally distributed random vector satisfies
(\ref{eq:two-swap}), no matter that its coordinates are not
identically distributed and so are not exchangeable unless the two
assets share the same volatility. This property helps to design
semi-static hedges for certain barrier options, e.g.\ the knock-out
contract with payoff defined by
\begin{displaymath}
  (aS_{T1}-bS_{T2})_+\one_{cS_{t1}>S_{t2}\forall t\in[0,T]}\,,
\end{displaymath}
where $S_{t1}$, $S_{t2}$, $t\in[0,T]$, are two price processes
(with equal carrying costs), for details see
Section~\ref{sec:hedg}, in particular
Example~\ref{ex:barrier-swap}.

We proceed with a concise discussion of the $ij$-exchangeability
property in Section~\ref{sec:exchangeability}.
Section~\ref{sec:ex-self-duality} characterises the weaker
swap-invariance property and discusses its relationships to
self-duality. Weighted variants of the swap-invariance are
considered in Section~\ref{sec:weight-swap-invar}.
Section~\ref{sec:swap-invar-infin} analyses log-infinitely
divisible distributions, exhibiting the swap-invariance property.
The necessity to handle unequal carrying costs in important
applications motivates further weakening of the swap-invariance
property in Section~\ref{sec:quasi-swap}. Finally
Section~\ref{sec:hedg} presents applications for creating
semi-static hedges for certain multi-asset derivatives with
knocking conditions.
The development of semi-static replication strategies of multi-asset
barrier options (see Section~\ref{sec:hedg}) and possibly also more
complicated path-dependent contracts is the probably most important
application of exchangeability type properties in finance. The
importance of developing robust hedging strategies for multi-asset
path-dependent financial derivatives is particularly stressed by Carr
and Laurence~\cite{car:laur09}.  Other obvious applications of the
described symmetry results may be found in the area of validating
models or analysing market data, e.g.\ extending the univariate case
considered in~\cite{bat97} and~\cite{faj:mor06}.

\section{Exchangeable random vectors}
\label{sec:exchangeability}

For each $i,j\in\{1,\dots,n\}$, $i\neq j$, define a linear mapping
on $\R^n$ by
\begin{displaymath}
  \pi_{ij} (x)=(x_1,\dots,x_{i-1},x_j,x_{i+1},\dots,
  x_{j-1},x_i,x_{j+1},\dots, x_n)\,,
\end{displaymath}
i.e.\ $\pi_{ij}$ transposes (swaps) the $i$th and $j$th
coordinates of $x$.
If the distribution of a random vector $\eta$ in $\R^n$ is
$\pi_{ij}$-invariant, we say that $\eta$ is \emph{$ij$-exchangeable}.
The following result follows directly from Theorem~\ref{thr:basket}.

\begin{corollary}
  \label{cor:ij-exch}
  An \emph{integrable} random vector $\eta$ is $ij$-exchangeable if
  and only if $\E\fb(u_0,u)$ is invariant with respect to permutation
  of the $i$th and $j$th coordinates of $u$ for all $u\in\R^n$ and
  any fixed $u_0\neq0$.
\end{corollary}

In view of financial applications assume that all coordinates of
$\eta$ are positive, so that $\eta=e^\xi$ for a random vector
$\xi=(\xi_1,\dots,\xi_n)$, where the exponential function is applied
coordinatewisely.  Because of the widespread use of L\'evy models for
derivative pricing we characterise infinitely divisible random vectors
$\xi=\log\eta$ for $ij$-exchangeable $\eta$.  In the sequel we denote
the Euclidean norm by $\|\cdot\|$, the imaginary unit $\sqrt{-1}$ by
$\imagi$, and use the following formulation of the
\emph{L\'evy-Khintchine formula} for the characteristic function of
$\xi$, see~\cite[Ch.~2]{sat99},
\begin{multline}
  \label{eq:levy-k}
  \varphi_\xi(u)=\E e^{\imagi\langle u,\xi\rangle}=
  \exp\bigg\{\imagi\langle\gamma,u\rangle-\thf\langle u,Au\rangle\\
  +\int_{\R^n}(e^{\imagi\langle u,x\rangle}-1
  -\imagi\langle u,x\rangle\one_{\| x\|\leq 1})d\nu(x)\bigg\}\,,\quad
  u\in\R^n\,,
\end{multline}
where $A$ is a symmetric non-negative definite $n\times n$ matrix,
$\gamma\in\R^n$ is a constant vector and $\nu$ is a L\'evy measure on
$\R^n$, namely $\nu(\{0\})=0$ and
\begin{equation}
  \label{eq:nu-conditions}
  \quad\int_{\R^n}\min(\|x\|^2,1)d\nu(x)<\infty\,.
\end{equation}

Since the $ij$-exchangeability of $\xi$ is equivalent to the
$\pi_{ij}$-invariance of its characteristic function, we immediately
obtain the following result.

\begin{proposition}
  \label{th:inv-div-CE}
  Let $\eta=e^\xi$ with $\xi$ being infinitely divisible.  Then $\eta$
  is $ij$-exchangeable if an only if the generating triplet
  $(A,\nu,\gamma)$ of $\xi$ satisfies the following conditions.
  \begin{itemize}
  \item[(1)] The matrix $A=(a_{lm})_{lm=1}^n$ satisfies
    $a_{ii}=a_{jj}$ and $a_{li}=a_{lj}$ for all $l=1,\dots,n$, $l\neq
    i,j$.
  \item[(2)] The L\'evy measure is $\pi_{ij}$-invariant, i.e.\
    $\nu(B)=\nu(\pi_{ij}B)$ for all Borel $B$.
  \item[(3)] The $i$th and $j$th coordinates of  $\gamma$
    coincide.
  \end{itemize}
\end{proposition}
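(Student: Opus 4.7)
The plan is to reduce the problem to uniqueness of the L\'evy--Khintchine triplet. Since $\eta=e^{\xi}$ with exponential applied coordinatewise, $ij$-exchangeability of $\eta$ is equivalent to $\pi_{ij}$-invariance of the distribution of $\xi$, i.e.\ to the identity $\varphi_\xi(u)=\varphi_\xi(\pi_{ij}u)$ for all $u\in\R^n$. The key structural facts I will use about $\pi_{ij}$ are that it is a symmetric orthogonal involution: $\pi_{ij}^\top=\pi_{ij}$, $\pi_{ij}^2=I$, and $\|\pi_{ij}x\|=\|x\|$. In particular $\langle\pi_{ij}u,x\rangle=\langle u,\pi_{ij}x\rangle$ and the truncation function $\one_{\|x\|\leq 1}$ is $\pi_{ij}$-invariant.

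First I would substitute $\pi_{ij}u$ for $u$ in (\ref{eq:levy-k}), obtaining in the deterministic part $\langle\gamma,\pi_{ij}u\rangle=\langle\pi_{ij}\gamma,u\rangle$ and $\langle\pi_{ij}u,A\pi_{ij}u\rangle=\langle u,(\pi_{ij}A\pi_{ij})u\rangle$. In the integral, I would perform the change of variables $y=\pi_{ij}x$, using that $\nu\circ\pi_{ij}^{-1}$ (denote it $\pi_{ij*}\nu$) is again a L\'evy measure since $\pi_{ij}$ is a linear isometry preserving the integrability condition (\ref{eq:nu-conditions}). The result is that $\varphi_\xi(\pi_{ij}u)$ has precisely the L\'evy--Khintchine form (\ref{eq:levy-k}) but with generating triplet $(\pi_{ij}A\pi_{ij},\,\pi_{ij*}\nu,\,\pi_{ij}\gamma)$.

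By the uniqueness of the generating triplet associated with an infinitely divisible distribution (see \cite[Ch.~2]{sat99}), the identity $\varphi_\xi(u)=\varphi_\xi(\pi_{ij}u)$ is therefore equivalent to the three equalities
\begin{displaymath}
  A=\pi_{ij}A\pi_{ij},\qquad \nu=\pi_{ij*}\nu,\qquad \gamma=\pi_{ij}\gamma.
\end{displaymath}
It then remains to translate these into the itemised conditions. The identity $\gamma=\pi_{ij}\gamma$ just says that swapping the $i$th and $j$th entries of $\gamma$ leaves $\gamma$ unchanged, i.e.\ condition~(3). The identity $\nu=\pi_{ij*}\nu$ reads $\nu(B)=\nu(\pi_{ij}^{-1}B)=\nu(\pi_{ij}B)$ for every Borel $B$ (using the involution property), which is condition~(2). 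Finally, conjugation by $\pi_{ij}$ swaps simultaneously rows and columns $i$ and $j$ of $A$, so $A=\pi_{ij}A\pi_{ij}$ is equivalent to $a_{\sigma(l)\sigma(m)}=a_{lm}$ for all $l,m$, where $\sigma$ transposes $i$ and $j$; this yields $a_{ii}=a_{jj}$ and $a_{li}=a_{lj}$ for $l\neq i,j$ (the relations $a_{il}=a_{jl}$ coincide with these by symmetry of $A$), which is condition~(1).

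No step poses a serious difficulty; the only matter requiring care is verifying that after the change of variables the resulting expression is genuinely the L\'evy--Khintchine representation of $\xi$ with a transformed triplet, so that the uniqueness theorem applies. This hinges precisely on the orthogonality of $\pi_{ij}$, which keeps both the compensator $\imagi\langle u,x\rangle\one_{\|x\|\leq 1}$ and the admissibility condition (\ref{eq:nu-conditions}) intact.
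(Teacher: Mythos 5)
Your proof is correct and follows essentially the same route as the paper's: both reduce $ij$-exchangeability to $\pi_{ij}$-invariance of $\varphi_\xi$, substitute $\pi_{ij}u$ into the L\'evy--Khintchine formula, use the orthogonality, self-inverse property, and norm-invariance of $\pi_{ij}$ to identify the transformed triplet, and then invoke uniqueness of the triplet to obtain the three conditions. Your write-up is merely more explicit about the change of variables in the integral and about why $A=\pi_{ij}A\pi_{ij}$ (i.e.\ $A$ commuting with $\pi_{ij}$) yields condition~(1).
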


\begin{example}[Log-normal distribution, Black--Scholes setting]
 \label{ex:log-n-mult}
 Assume that $\eta=e^\xi$ is log-normal with $\xi$ having expectation
 $\mu$ and covariance matrix $A$.
 Then $\eta$ is $ij$-exchangeable if and only if $A$ satisfies
 $a_{ii}=a_{jj}$ and $a_{li}=a_{lj}$ for $l=1,\dots,n$, $l\neq i,j$,
 (with the remaining $a_{lm}$ arbitrarily chosen such that $A$ is
 non-negative-definite) and $\mu_i=\mu_j$. The latter automatically
 holds if all components of $\eta$ are related to a martingale
 measure, i.e.\ $\mu=-\thf(a_{11},\dots,a_{nn})$. In
 \emph{bivariate risk-neutral} cases the only restriction is the
 equality of the variances, while the correlation coefficient between
 $\xi_1$ and $\xi_2$ can be arbitrary.
\end{example}

\section{Swap-invariance}
\label{sec:ex-self-duality}

Now we consider the symmetry property for the payoff function
(\ref{eq:fbo}).

\begin{definition}
  \label{def:si}
  An integrable random vector $\eta$ with positive components is
  said to be \emph{$ij$-swap-invariant} if the expected value $\E\fbo(u)$ is
  invariant with respect to swapping the $i$th and $j$th coordinates
  of any $u\in\R^n$.
\end{definition}

This property yields that $\E\eta_i=\E\eta_j$, but is clearly
weaker than the $ij$-exchangeability of $\eta$.

In the following we often need to change the probability measure
$\Q$. Let $\eta=e^\xi$ and let $\zeta$ be a random variable that
together with $\xi$ builds $(n+1)$-dimensional random vector
$(\xi,\zeta)$. If $e^{\langle w,(\xi,\zeta)\rangle}$ with
$w\in\R^{n+1}$ is integrable, define $\Q^w$ by
\begin{equation}
  \label{eq:change-measure}
  \frac{d\Q^w}{d\Q}=\frac{e^{\langle w,(\xi,\zeta)\rangle}}
  {\E e^{\langle w,(\xi,\zeta)\rangle}}\,,
\end{equation}
i.e. $\Q^w$ is the Esscher transform of $\Q$ with parameter $w$.  In
case $w\in\R^n$, the same notation applies with $w$ extended by zero
component.  If $w=e_j$ is the $j$th standard basis vector in
$\R^{n+1}$, then we write shortly $\Q^j$.
The expectation with respect to changed
measures is indicated by the corresponding subscript.

The following result shows that $ij$-swap-invariance is related to the
self-duality in a lower-dimensional space. Define functions
$\kappat_j:(0,\infty)^n\mapsto(0,\infty)^{n-1}$ acting as
\begin{displaymath}
  \kappat_j(x)=\left(\frac{x_1}{x_j},\dots,\frac{x_{j-1}}{x_j},
    \frac{x_{j+1}}{x_j},\dots,\frac{x_n}{x_j}\right)\,, \quad j=1,\dots,n\,.
\end{displaymath}

\begin{theorem}
  \label{th:swap-inv-self-duality}
  Let $\eta$ be integrable and $i,j\in\{1,\dots,n\}$, $i<
  j$. Then the following two statements are equivalent.
  \begin{itemize}
   \item[(I)]  The $n$-dimensional random vector $\eta$ is $ij$-swap-invariant
     under $\Q$.
   \item[(II)] The $(n-1)$-dimensional random vector $\kappat_j(\eta)$
     is self-dual with respect to the $i$th numeraire under the
     probability measure $\Q^j$.
   \end{itemize}
\end{theorem}
\begin{proof}
  The change of measure formula yields that for all $u\in\R^n$
  \begin{align*}
    \E_{\Q^j}\Big(\sum_{l=1,l\neq j}^n u_l\frac{\eta_l}{\eta_j}+u_j\Big)_+
    &=(\E\eta_j)^{-1}\E \big(\sum_{l=1}^n u_l\eta_l\big)_+\,,\\
    \E_{\Q^j}\Big(\sum_{l=1,l\neq i,j}^n
    u_l\frac{\eta_l}{\eta_j}+u_i+u_j\frac{\eta_i}{\eta_j}\Big)_+
    &=(\E\eta_j)^{-1}\E \big(\sum_{l=1,l\neq i,j}^n
    u_l\eta_l+u_i\eta_j+u_j\eta_i\big)_+\,.
  \end{align*}
  The equality of the right-hand sides characterises the
  $ij$-swap-invariance of $\eta$, while the equality of the left-hand
  sides means the self-duality of $\kappat_j(\eta)$ with respect to
  the $i$th numeraire under $\Q^j$.
\end{proof}

\begin{remark}
  \label{rem:kappat}
  In view of Corollary~\ref{cor:ij-exch} one
  can show by a similar argument that if $n\geq 3$ and $i,j<k$
  (for notational convenience), then condition~(I) holds if and only
  if $\kappat_k(\eta)$ is $ij$-exchangeable under $\Q^k$. In the
  risk-neutral foreign exchange setting $\Q^k$ acquires an immediate
  interpretation in the market where trades take place in the currency
  number $k$.
\end{remark}

\begin{example}[Bivariate swap-invariance and symmetry]
  \label{eg:swap-inv-self-duality}
  Let $\eta$ be a bivariate swap-invariant random vector with
  $\E\eta_1=1$.  Then $\kappat_1(\eta)=\eta_2/\eta_1$ is denoted by
  $\tilde\eta$ and
  \begin{displaymath}
    \E_{\Q^1}(u_1\tilde\eta+u_2)_+=\E (u_1\eta_2+u_2\eta_1)_+
    =\E (u_1\eta_1+u_2\eta_2)_+=\E_{\Q^1}(u_1+u_2\tilde\eta)_+\,,
  \end{displaymath}
  for all $u_1,u_2\in\R$.
  Hence, $(\eta_1,\eta_2)$ is swap-invariant under $\Q$ if and only if
  $\tilde\eta$ satisfies the classical univariate European put-call
  symmetry under the ``dual-market'' measure $\Q^1$. In particular,
  this means that bivariate swap-invariance is not more restrictive
  than the very well-known and often applied European put-call
  symmetry. For the analysis and characterisation of even weaker
  properties we refer to Section~\ref{sec:quasi-swap}.
\end{example}

The expected payoff $\E \fbo(u)$ considered a function of $u$ becomes
the support function of an $n$-dimensional convex body called the
\emph{zonoid} of $\eta$, see e.g.~\cite{mos02} for a detailed
discussion about these well-known convex bodies in relation to random
vectors. In particular, it is well known that zonoids (unlike lift
zonoids from Theorem~\ref{thr:basket}) do not uniquely characterise
the distribution of $\eta$.

The $ij$-swap-invariance of $\eta$ means that its zonoid is symmetric
with respect to the plane $\{u_i=u_j\}$, equivalently, that $\eta$ and
$\pi_{ij}\eta$ share the same zonoid.  In view of this, we first
characterise general non-negative integrable random vectors with equal
zonoids. In the following denote $\onev=(1,\dots,1)$ in the space of
an appropriate dimension.

\begin{theorem}
  \label{th:char-equal-zonoids}
  Let $\eta=e^{\xi}$ and $\eta^*=e^{\xi^*}$ be integrable random
  vectors. Then
  \begin{equation}
   \label{eq:sup-zon}
    \E(\langle u,\eta\rangle)_+=\E(\langle u,\eta^*\rangle)_+\quad\text{for all } u\in\R^n
  \end{equation}
  if and only if
  \begin{equation}
   \label{eq:char-f-zon}
    \varphi_\xi(u-\imagi w)=\varphi_{\xi^*}(u-\imagi w)
  \end{equation}
   for all $u\in\HH$, where
  \begin{equation}
    \label{eq:h}
    \HH  =\{u\in\R^n:\; \sum_{k=1}^n u_k=0\}\,,
  \end{equation}
  and for at least one (and then necessarily for all) $w$,
  such that $\sum w_i=1$ and both sides in (\ref{eq:char-f-zon}) are
  finite.
\end{theorem}
\begin{proof}
  \textsl{Necessity.} Equality (\ref{eq:sup-zon}) implies that
  $\E\eta_i=\E\eta^*_i$ for all $i$. Change measure $\Q$ to $\Q^1$ and
  $\Q^{1*}$ using respectively $\eta_1$ and $\eta_1^*$ as the density
  normalised by the expectation. By Theorem~\ref{thr:basket}, the
  distribution of $\kappat_1(\eta)$ under $\Q^1$ coincides with the
  distribution of $\kappat_1(\eta^*)$ under $\Q^{1*}$. Assume that
  (\ref{eq:char-f-zon}) is finite, i.e.\ $\E e^{\langle
    w,\xi\rangle}<\infty$ for some $w\in\R^n$ with $\sum w_i=1$. Then
  \begin{displaymath}
    f(\xi)=\exp\{\imagi \langle
    (u_2,\dots,u_n)-\imagi (w_2,\dots,w_n),(\xi_2-\xi_1,\dots,\xi_n-\xi_1)\rangle\}
  \end{displaymath}
  is integrable under $\Q^1$, so that $f(\xi^*)$ is integrable under
  $\Q^{1*}$ and both expectations are equal. By changing back
  the measures and using $u=(-\sum_{i=2}^n u_i,u_2,\dots,u_n)\in\HH$ and
  $\sum_{k=1}^n w_k=1$ this implies~(\ref{eq:char-f-zon}).

  \textsl{Sufficiency.} If the both sides of (\ref{eq:char-f-zon}) are
  finite and equal for some $w$, then
  \begin{displaymath}
    \E e^{\langle w,\xi\rangle} = \E e^{\langle w,\xi^*\rangle}=c\,.
  \end{displaymath}
  Thus, the characteristic functions (restricted on $\HH$) of $\xi$
  under the changed measure $\Q^w$ and of $\xi^*$ under $\Q^{w*}$
  coincide, where the change of measure is done with the normalised
  densities $e^{\langle w,\xi\rangle}$ and $e^{\langle
    w,\xi^*\rangle}$ respectively. Therefore, $\xi-\onev \xi_1$ under
  $\Q^w$ is identically distributed as $\xi^*-\onev \xi_1^*$ under
  $\Q^{w*}$. Using that $\sum_{k=1}^n w_k=1$ and changing measures, we obtain
  \begin{align*}
    \E\langle u,\eta\rangle_+&=c\E_{\Q^w}[\langle
    u,e^\xi e^{-\xi_1}\rangle_+e^{\xi_1}e^{-\langle
    w,\xi\rangle}]=c\E_{\Q^w}[\langle
    u,e^{\xi-\onev\xi_1}\rangle_+e^{-\langle w,\xi-\onev\xi_1\rangle}]\,,\\
    \E\langle u,\eta^*\rangle_+&=c\E_{\Q^{w*}}[\langle
    u,e^{\xi^*-\onev\xi^*_1}\rangle_+e^{-\langle
    w,\xi^*-\onev\xi^*_1\rangle}]\,.
  \end{align*}
  Since $\xi-\onev \xi_1$ under
  $\Q^w$ shares the distribution with $\xi^*-\onev \xi_1^*$ under
  $\Q^{w*}$ the right- and thus, also the left hand sides coincide,
  i.e.\ we arrive at~(\ref{eq:sup-zon}). The necessity yields
  that~(\ref{eq:char-f-zon}) holds for all $w$ such that the characteristic
  function is finite and $\sum w_k=1$.
\end{proof}

\begin{remark}
  \label{rem:integr}
  By the generalised H\"older inequality, the integrability of $\eta$
  and $\eta^*$ in Theorem~\ref{th:char-equal-zonoids} yields that the
  characteristic functions in (\ref{eq:char-f-zon}) are finite for all
  $w$ from the unit simplex
  \begin{equation}
    \label{eq:delta}
    \Delta=\{x=(x_1,\dots,x_n):\; x_i\geq 0, i=1,\dots,n,\;
    \sum x_i=1\}\,.
  \end{equation}
  Thus, the set of all $w\in\R^n$ such that $e^{\langle w,\xi\rangle}$
  is integrable contains the unit simplex $\Delta$ if $e^\xi$ is
  integrable itself.
\end{remark}

Let $\sH_\beta$ denote the family of non-negative
positive-$\beta$-homogeneous functions $g:(0,\infty)^n\mapsto\R_+$,
i.e.\ $g(cx)=c^\beta g(x)$ for all $c>0$ and $x\in(0,\infty)^n$. Note
that $\fbo\in\sH_1$. Other examples of payoff functions of class
$\sH_\beta$ can be found in the literature about the duality
principle, see e.g.~\cite{eber:pap:shir08b,faj:mor08}. The following
result says that the equality of expected payoffs from exchange
options implies the equality of expected payoffs from the whole family
$\sH_1$, despite of the fact that the asset prices do not necessarily
coincide in distribution.

\begin{theorem}
  \label{thr:pos-1-hom-gen-two}
  If integrable random vectors $\eta=e^{\xi}$ and $\eta^*=e^{\xi^*}$
  satisfy~(\ref{eq:sup-zon}) (i.e.\ share the same zonoid), then $\E
  g(\eta) =\E g(\eta^*)$ for all $g\in\sH_1$.
\end{theorem}
\begin{proof}
  By choosing $u=e_1$ in (\ref{eq:sup-zon}) we arrive at
  $\E\eta_1=\E\eta^*_1$.  Hence,~(\ref{eq:sup-zon}) is equivalent to
  \begin{equation}
    \label{eq:equal-lz}
    \E_{\Q^1}\Big(u_1+\sum_{i=2}^n u_i\frac{\eta_i}{\eta_1}\Big)_+
    =\E_{\Q^{1*}}\Big(u_1+\sum_{i=2}^n
    u_i\frac{\eta_i^*}{\eta_1^*}\Big)_+
  \end{equation}
  for all $u\in\R^n$. By Theorem~\ref{thr:basket}, the distribution of
  $\kappat_1(\eta)$ under $\Q^1$ coincides with the distribution of
  $\kappat_1(\eta^*)$ under $\Q^{1*}$ so that
  \begin{displaymath}
    \E g(\eta)=c\E_{\Q^1} g((1,\kappat_1(\eta)))
    =c\E_{\Q^{1*}} g((1,\kappat_1(\eta^*)))=\E g(\eta^*)
  \end{displaymath}
  for all $g\in\sH_1$.
\end{proof}

It is possible to generalise this characterisation for functions from
the family $\sH_\beta$.

\begin{theorem}
  \label{thr:beta-hom}
  Assume that random vectors $\eta=e^{\beta\xi}$ and
  $\eta^*=e^{\beta\xi^*}$ are integrable for some $\beta\in\R$. Then
  $\E g(\eta)=\E g(\eta^*)$ for all $g\in\sH_\beta$ if and only if
  \begin{equation}
    \label{eq:beta-hom}
    \phi_{\xi}(u-\imagi w)=\phi_{\xi^*}(u-\imagi w)
  \end{equation}
  for all $u\in\HH$ and at least one (and then necessarily for all)
  $w\in\R^n$ such that $\sum w_k=\beta$ and the characteristic
  functions in (\ref{eq:beta-hom}) exist.
\end{theorem}
\begin{proof}
  Assume first that $\beta\neq0$.  If $g\in\sH_\beta$ with $\beta\neq
  0$, then $g_1(x)=g(x^{1/\beta})$ with the power operation applied
  coordinatewisely is $1$-homogeneous and
  Theorem~\ref{thr:pos-1-hom-gen-two} applies. Thus,
  $\E[g(\eta)]=\E[g(\eta^*)]$ for all $g\in\sH_\beta$ if and only if
  (\ref{eq:char-f-zon}) holds for the characteristic functions of
  $\beta\xi$ and $\beta\xi^*$ (corresponding to $\eta^\beta$ and
  $(\eta^*)^\beta$) for all $u\in\HH$ and at least one (and then
  necessarily for all) $w$ with $\sum w_k=1$. Rewriting
  (\ref{eq:char-f-zon}) for the characteristic functions for $\xi$ and
  $\xi^*$ yields~(\ref{eq:beta-hom}).

  Now let $\beta=0$. For all $u\in\R_+^n$ consider the integrable
  functions $g(\eta)=(u_1-\sum_{i=2}^n u_i \eta_i/\eta_1)_+\in\sH_0$.
  A version of Theorem~\ref{thr:basket} for positive random vectors
  from \cite[Th.~1.1]{hen:sha90} yields that $\kappat_1(\eta)$ and
  $\kappat_1(\eta^*)$ are identically distributed. Calculating the
  characteristic functions of $\xi$ and $\xi^*$ with an arbitrary
  complex argument $w$ yields~(\ref{eq:beta-hom}) with $\sum w_k=0$
  and the characteristic functions exist (at least) for all $w$ with
  vanishing imaginary part.

  In the other direction, (\ref{eq:beta-hom}) implies that
  $\kappat_1(\eta)$ under $\Q^w$ and $\kappat_1(\eta^*)$ under
  $\Q^{w*}$ are identically distributed. Here we have also used that
  (\ref{eq:beta-hom}) yields that $\E e^{\langle w,\xi\rangle}=\E
  e^{\langle w,\xi^*\rangle}=c$. Let $g\in\sH_0$. Then
  $e^{-\langle w,\xi\rangle}=e^{-\langle
      w,\xi-\onev\xi_1\rangle}$ is a function of $(1,\kappat_1(\eta))$,
  so that we can denote
  $f(\kappat_1(\eta))=g((1,\kappat_1(\eta)))e^{-\langle
    w,\xi-\onev\xi_1\rangle}$. Then
  \begin{displaymath}
    \E g(\eta)=c\E_{\Q^w} f(\kappat_1(\eta))=
    c\E_{\Q^{w*}}f(\kappat_1(\eta^*))=\E g(\eta^*)\,.
  \end{displaymath}
\end{proof}

The following result follows directly from
Theorem~\ref{th:char-equal-zonoids} noticing that the
$ij$-swap-invariance of $\eta$ means that $\eta$ and
$\pi_{ij}\eta$ have equal zonoids, or equivalently, that
$\E\langle
u,\eta\rangle_+=\E\langle\pi_{ij}u,\eta\rangle_+=\E\langle
u,\pi_{ij}\eta\rangle_+$ for all $u\in\R^n$.

\begin{corollary}
  \label{th:char-swap-n}
  An integrable random vector $\eta=e^\xi$ is $ij$-swap-invariant if
  and only if the characteristic function of $\xi$ satisfies
  \begin{equation}
    \label{eq:char-swap-inv}
      \varphi_\xi(u-\imagi w)
      =\varphi_\xi(\pi_{ij}(u-\imagi w))
  \end{equation}
  for all $u\in\HH$ and for at least one (and then necessarily for
  all) $w\in \Delta$.
\end{corollary}

While the characteristic function (\ref{eq:char-swap-inv}) exists for
all $w\in\Delta$, it is possible to relax the latter condition.
Namely, integrable $\eta$ is $ij$-swap-invariant if and only
if~(\ref{eq:char-swap-inv}) holds for all $u\in\HH$ and at least one
(and then necessarily for all) vectors $w$, such that
$\sum_{k=1}^nw_k=1$ and one side of~(\ref{eq:char-swap-inv}) is
finite, in other words such that $\E e^{\langle w,\xi\rangle}<\infty$.

The complex shifts on both sides of (\ref{eq:char-swap-inv}) are the
same if $w_i=w_j$. In the most important special case $w=\thf e_{ij}$
with $e_{ij}=e_i+e_j$, so that the $ij$-swap-invariance
characterisation reads
\begin{equation}
  \label{eq:half}
  \varphi_\xi(u-\imagi\thf e_{ij})
  =\varphi_\xi(\pi_{ij}u-\imagi\thf e_{ij})\,,\quad u\in\HH\,.
\end{equation}

\begin{corollary}
  \label{cor:projected-xi}
  An integrable random vector $\eta=e^\xi$ is $ij$-swap-invariant if
  and only if the orthogonal projection
  of $\xi$ onto $\HH$ is $ij$-exchangeable under the
  probability measure $\Q^w$ for at least one (and then necessarily
  for all) $w\in\Delta$ such that $w_i=w_j$.
\end{corollary}

\begin{remark}[Independency and self-duality in the bivariate case]
  Consider a bivariate integrable random vector $\eta=(\eta_1,\eta_2)$
  with independent components. A sufficient condition for $\eta$ to be
  swap-invariant is that both $\eta_1$ and $\eta_2$ are self-dual
  random variables, since then we have for arbitrary
  $(u_1,u_2)\in\R^2$
  \begin{align*}
    \E[(u_1\eta_1+u_2\eta_2)_+]&=\E[\E[(u_1\eta_1+u_2\eta_2)_+|\eta_1]]\\
    &=\E[\E[(u_1\eta_1\eta_2+u_2)_+|\eta_2]]
    =\E[(u_1\eta_2+u_2\eta_1)_+]\,.
  \end{align*}
  Note that this construction does not apply for $\eta$ of dimension
  $3$ and more.  Non-exchangeable swap-invariant random vectors with
  independent not necessarily self-dual components can be constructed
  in the following way.  Consider integrable i.i.d.\ $\zeta_1,\zeta_2$
  and self-dual $\tilde\zeta_1,\tilde\zeta_2$ all jointly independent.
  It is easy to see that
  $(\eta_1,\eta_2)=(\zeta_1\tilde\zeta_1,\zeta_2\tilde\zeta_2)$ is a
  swap-invariant random vector with not necessarily self-dual
  independent components.
  It is apparent from \cite{mol:sch10} that the product of a self-dual
  random variable and a general one is not necessarily self-dual.
\end{remark}

\section{Weighted swap-invariance}
\label{sec:weight-swap-invar}

The introduced swap-invariance concept relies on invariance
properties of payoff function $\fbo$ from (\ref{eq:fbo}). It is
also possible to modify this payoff function by introducing a
positive weight given by a random variable $e^\vartheta$.  A
random vector $\eta$ is called \emph{weighted $ij$-swap-invariant}
if $e^{\vartheta}\eta$ is integrable and
\begin{equation}
  \label{eq:weighted-prop}
  \E(e^{\vartheta}\fbo(u))=\E(e^{\vartheta}\fbo(\pi_{ij}(u)))\quad\text{for all}\; u\in\R^n\,.
\end{equation}
In this case we write $\eta\in\WSI_{ij}(\vartheta)$.
The involved payoff function is typical for so-called
\emph{quanto}-swap options.

\begin{theorem}
  \label{th:weighted-char-swap}
  Let $\eta=e^\xi$ be a random vector and let $e^\vartheta$ be a random
  variable such that $e^{\vartheta}\eta$ is integrable.  Then
  $\eta\in\WSI_{ij}(\vartheta)$ if and only if
  \begin{equation}
    \label{eq:weighted-on-char}
    \phi_{\xi+\onev\vartheta}(u-\imagi w)
    =\phi_{\xi+\onev\vartheta}(\pi_{ij}(u-\imagi w))
  \end{equation}
  for all $u\in\HH$ and for at least one (and then necessarily for
  all) $w\in\Delta$.
\end{theorem}
\begin{proof}
  It suffices to note that $e^\vartheta\eta$ and $e^\vartheta\pi_{ij}\eta$
  share the same zonoid and apply Theorem~\ref{th:char-equal-zonoids}.
\end{proof}

If $w=\thf e_{ij}$, then (\ref{eq:weighted-on-char}) simplifies to
\begin{equation}
  \label{eq:char-gen-quanto-inv-w-char-f}
  \varphi_{\xi+\onev\vartheta}(u-\imagi \thf e_{ij})
  =\varphi_{\xi+\onev\vartheta}(\pi_{ij}u-\imagi\thf
  e_{ij})\quad\text{for all }u\in\HH\,.
\end{equation}

If the log-weight $\vartheta$ is given by a linear combination of
the log-prices of the assets included in $\fbo(u)$, i.e.\
$\vartheta=\langle\xi,v\rangle$ for some $v\in\R^n$, we obtain the
following result.

\begin{corollary}
 \label{cor:lin-com-weight}
 Let $\eta=e^\xi$ be a random vector such that $e^{\vartheta}\eta$ is
 integrable with $\vartheta=\langle v,\xi\rangle$ for some $v\in\R^n$.
 Then $\eta\in\WSI_{ij}(\vartheta)$ if and only if
 \begin{equation}
    \label{eq:char-gen-quanto-inv-w-char-f-lin}
      \varphi_\xi(u-\imagi w - \imagi v)
      =\varphi_\xi(\pi_{ij}(u-\imagi w) - \imagi v)
  \end{equation}
  for all $u\in\HH$ and for at least one (and then necessarily for
  all) $w\in\Delta$.
\end{corollary}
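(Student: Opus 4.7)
The plan is to derive the claim directly from Theorem~\ref{th:weighted-char-swap} by means of an Esscher-type change of measure that expresses $\varphi_\xi^{\tilde\Q}$ in terms of $\varphi_\xi$ under $\Q$. Since $\zeta=\langle v,\xi\rangle$, the Radon--Nikodym density in (\ref{eq:tilde-Q}) takes the explicit form
\[
\frac{d\tilde\Q}{d\Q}=\frac{e^{\langle v,\xi\rangle}}{\varphi_\xi(-\imagi v)},
\]
where the denominator $\varphi_\xi(-\imagi v)=\E e^{\langle v,\xi\rangle}$ is finite by the assumed integrability of $e^\zeta$. This reduces the problem to a purely algebraic manipulation of characteristic functions.

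The key computational step is the identity
\[
\varphi_\xi^{\tilde\Q}(w)=\E_{\tilde\Q}e^{\imagi\langle w,\xi\rangle}
=\frac{\E e^{\langle v+\imagi w,\xi\rangle}}{\varphi_\xi(-\imagi v)}
=\frac{\varphi_\xi(w-\imagi v)}{\varphi_\xi(-\imagi v)},
\]
valid for all $w\in\R^n$. I would then substitute $w=u-\imagi\thf e_{ij}$ and $w=\pi_{ij}u-\imagi\thf e_{ij}$ in turn, obtaining
\[
\varphi_\xi^{\tilde\Q}(u-\imagi\thf e_{ij})
=\frac{\varphi_\xi(u-\imagi[\thf e_{ij}+v])}{\varphi_\xi(-\imagi v)},
\qquad
\varphi_\xi^{\tilde\Q}(\pi_{ij}u-\imagi\thf e_{ij})
=\frac{\varphi_\xi(\pi_{ij}u-\imagi[\thf e_{ij}+v])}{\varphi_\xi(-\imagi v)}.
\]
The condition (\ref{eq:char-gen-quanto-inv-w-char-f}) of Theorem~\ref{th:weighted-char-swap} is therefore equivalent to the numerators being equal, which is precisely (\ref{eq:char-gen-quanto-inv-w-char-f-lin}).

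The only point that requires some care is the extension of $\varphi_\xi$ to the complex arguments $-\imagi v$ and $u-\imagi[\thf e_{ij}+v]$. Finiteness of $\varphi_\xi(-\imagi v)$ follows from $\E e^\zeta<\infty$; finiteness of $\E e^{\langle v+e_i,\xi\rangle}$ and $\E e^{\langle v+e_j,\xi\rangle}$ follows from the integrability of $e^\zeta\eta$; and the intermediate exponent $v+\thf e_{ij}$ is covered by Hölder's inequality, so the exponential moment generating function is finite on the relevant strip and the Esscher computation above is rigorously justified. I expect this integrability bookkeeping to be the only genuinely subtle point; the rest of the argument is a direct substitution.
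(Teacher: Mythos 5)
Your proposal is correct and follows exactly the route the paper intends: the paper's entire proof is the remark ``by changing measure we immediately get the following result,'' and your explicit computation $\varphi_\xi^{\tilde\Q}(w)=\varphi_\xi(w-\imagi v)/\varphi_\xi(-\imagi v)$ followed by substitution into (\ref{eq:char-gen-quanto-inv-w-char-f}) is precisely the change-of-measure step being invoked. The integrability bookkeeping via H\"older for the intermediate exponent $v+\thf e_{ij}$ is a nice touch that the paper leaves implicit.
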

\begin{proof}
  Since $u\in\HH$ and $w\in\Delta$,
  \begin{align*}
    \varphi_{\xi+\onev\vartheta}(u-\imagi w)&=\E e^{\imagi(\langle
      u-\imagi w,\xi+\onev\langle
      v,\xi\rangle\rangle)}\\
    &=\E e^{\imagi\langle u-\imagi w,\xi\rangle+\imagi\langle
      v,\xi\rangle
      (\langle u,\onev\rangle-\imagi\langle w,\onev\rangle)}
    =\E e^{\imagi\langle u-\imagi w-\imagi v,\xi\rangle}
  \end{align*}
  and analogously $\varphi_{\xi+\onev\vartheta}(\pi_{ij}(u-\imagi w))
  =\varphi_\xi(\pi_{ij}(u-\imagi w)- \imagi v)$.
\end{proof}

\section{Swap-invariance for L\'evy models}
\label{sec:swap-invar-infin}

In this section we assume that $\eta=e^\xi$ with $\xi$ being
infinitely divisible, i.e.\ $\xi=L_1$ for a L\'evy process $L_t$,
$t\geq0$, see~\cite{sat99}. In order to handle possibly weighted cases
consider also a random variable $\zeta$ such that $(\xi,\zeta)$ is
infinitely divisible.

Define the linear transformation (actually orthogonal projection),
which maps every $x\in\R^{n+1}$ onto the hyperplane $\HH$ in the space
of dimension $n$ acting as $Px$ with the matrix
\begin{equation}
  \label{eq:P}
  P=
  \begin{pmatrix}
    1-\frac{1}{n} & -\frac{1}{n} & \cdots & -\frac{1}{n} & 0\\[2mm]
    -\frac{1}{n} & 1-\frac{1}{n} & \cdots & -\frac{1}{n} & 0\\[2mm]
    \vdots & \vdots & \vdots & \vdots & \vdots \\
    -\frac{1}{n} & -\frac{1}{n} & \cdots & 1-\frac{1}{n} & 0
  \end{pmatrix}\,.
\end{equation}

Corollary~\ref{th:char-swap-n},
Theorem~\ref{th:weighted-char-swap}, and
Corollary~\ref{cor:lin-com-weight} provide many equivalent
characterisations of the (weighted) $ij$-swap-invariance in terms
of various $w\in\Delta\subset\R^n$. In order to simplify the
calculations we let $w=\thf(e_i+e_j)$ in the sequel (so that $w$
is $\pi_{ij}$-invariant) and we can consider $e_i$ and $e_j$ to be
standard basis vectors in $\R^{n+1}$. Sometimes we add the zero
component to the vectors $u\in\HH$ and then write $(u,0)$.

\begin{theorem}
  \label{th:inv-div-prime}
  Let $(\xi,\zeta)$ be infinitely divisible $(n+1)$-dimensional random
  vector such that $e^{\langle v,(\xi,\zeta)\rangle}e^\xi$ is
  integrable for some $v\in\R^{n+1}$.  Then
  $\eta=e^\xi\in\WSI_{ij}(\langle v,(\xi,\zeta)\rangle)$ if and only
  if the characteristic triplet $(A,\nu,\gamma)$ of $(\xi,\zeta)$
  satisfies the following conditions.
   \begin{itemize}
   \item[(1)] If $n\geq3$, the matrix $A$ satisfies
     \begin{equation}
       \label{eq:asi}
       a_{li}-a_{lj}=\thf(a_{ii}-a_{jj})
     \end{equation}
     for all $l\neq i,j$, $l\leq n$.
   \item[(2)] The image $\hat\nu P^{-1}$ under $P$ of measure
     \begin{equation}
       \label{eq:nu-Essc}
       d\hat\nu(x)=e^{\langle \thf e_{ij}+v,x\rangle}d\nu(x)\,,\quad x\in\R^{n+1}\,,
     \end{equation}
     is $\pi_{ij}$-invariant on $\HH\setminus\{0\}$.
  \item[(3)] $\gamma$ satisfies
    \begin{multline}
      \label{eq:th-third}
      \gamma_i-\gamma_j=\thf(a_{jj}-a_{ii})+\sum_{k=1}^{n+1}
      (a_{jk}-a_{ik})v_k\\
      +\int_{\R^{n+1}}(x_j-x_i)(e^{\langle \thf e_{ij}+v,x\rangle}
      \one_{\|Px\|\leq 1}-\one_{\|x\|\leq 1})d\nu(x)\,.
    \end{multline}
  \end{itemize}
\end{theorem}
\begin{proof}
  Since $u\in\HH$ we can express~(\ref{eq:char-gen-quanto-inv-w-char-f}) in terms
  of the joint characteristic function of $(\xi,\zeta)$, i.e.
  \begin{equation}
    \label{eq:char-gen-quanto-inv-w-char-f-ref}
    \varphi_{(\xi,\zeta)}((u,0)-\imagi(\thf e_{ij}+v))=
    \varphi_{(\xi,\zeta)}(\pi_{ij}(u,0)-\imagi(\thf
    e_{ij}+v))\,,
  \end{equation}
  where from now on $\pi_{ij}$ stands for the corresponding permutation matrix
  of appropriate dimension. Let $\hat\Q$ be the Esscher transform~(\ref{eq:change-measure})
  of $\Q$ with parameter $w=\thf e_{ij}+v$. The characteristic triplet
  $(\hat A,\hat\nu,\hat\gamma)$ of $(\xi,\zeta)$ under $\hat\Q$ is given by
  $\hat A=A$, the new L\'evy measure $\hat\nu$ is given by (\ref{eq:nu-Essc}),
  and
  \begin{equation}
    \label{eq:ga-Essc}
    \hat\gamma=\gamma+A(\thf e_{ij}+v)
    +\int_{\R^{n+1}}x(e^{\langle \thf e_{ij}+v,x\rangle}-1)
    \one_{\|x\|\leq 1} d\nu(x)\,,
  \end{equation}
  see~\cite[Ex.~7.3]{sat00} and~\cite[Th.~25.17]{sat99} for the
  extension of the L\'evy-Khintchine formula to the needed
  subset in the $(n+1)$-dimensional complex plane in view of the imposed
  integrability conditions.

  By~(\ref{eq:char-gen-quanto-inv-w-char-f-ref}) and in view of
  Corollary~\ref{cor:projected-xi}, the weighted swap-invariance of
  $\eta$ means that $(\xi,\zeta)$ projected by $P$ onto $\HH$ is
  $ij$-exchangeable under $\hat\Q$.  This projection has the
  characteristic triplet $(A',\nu',\gamma')$, where
  $A'=P\hat{A}P^\top=PAP^\top$, $\nu'=\hat\nu P^{-1}$ is the projection of $\hat\nu$
  on $\HH\setminus\{0\}$ and
  \begin{equation}
    \label{eq:ga-proj}
    \gamma'=P\hat{\gamma}+\int_{\R^{n+1}} Px(\one_{\|Px\|\leq 1}-\one_{\|x\|\leq1})d\hat{\nu}(x)\,,
  \end{equation}
  see~\cite[Prop.~11.10]{sat99}. The elements of $A'$ can be calculated as
  \begin{displaymath}
    a'_{ij}=a_{ij}-\frac{1}{n}
    \Big(\sum_{k=1}^n a_{ki}+\sum_{k=1}^n a_{kj}\Big)+\frac{1}{n^2}\sum_{k,l=1}^na_{lk}\,.
  \end{displaymath}
  Since the projection of $(\xi,\zeta)$ is $ij$-exchangeable,
  Proposition~\ref{th:inv-div-CE}(1) requires $a'_{ii}=a'_{jj}$,
  so that
  \begin{equation}
    \label{eq:asilong}
    a_{ii}-\frac{2}{n}\sum_{k=1}^n a_{ik}
    =a_{jj}-\frac{2}{n}\sum_{k=1}^n a_{jk}\,.
  \end{equation}
  Furthermore, $a'_{li}=a'_{lj}$ for $l\neq i,j$ yields
  (\ref{eq:asi}), which also always satisfies (\ref{eq:asilong}).  By
  Proposition~\ref{th:inv-div-CE}, $\nu'$ is symmetric with respect to
  $\pi_{ij}$ and $\gamma'_i=\gamma'_j$. By
  combining~(\ref{eq:ga-proj}) with~(\ref{eq:ga-Essc}) we obtain
  (\ref{eq:th-third}).
\end{proof}

By combining Theorem~\ref{th:inv-div-prime}
with~\cite[Prop.~11.10]{sat99} and changing variables, or adapting
the proof of Theorem~\ref{th:inv-div-prime}, we obtain the
following result.

\begin{corollary}
  \label{cor:non-weighted}
  The integrable random vector $\eta=e^\xi$ with infinitely divisible $\xi$
  having the L\'evy triplet $(A,\nu,\gamma)$ is $ij$-swap-invariant if
  and only if condition (1) of Theorem~\ref{th:inv-div-prime} holds for
  the $n\times n$ matrix $A$, the orthogonal projection of measure
  \begin{equation}
    \label{eq:bar-nu}
    d\bar\nu(x)=e^{\thf(x_i+x_j)}d\nu(x)\,,\quad
    x\in\R^n\,,
  \end{equation}
  on $\HH\setminus\{0\}$ is $\pi_{ij}$-invariant
  and
  \begin{equation}
    \label{eq:third-no-zeta}
    \gamma_i-\gamma_j=\thf(a_{jj}-a_{ii})
    +\int_{\R^n}(x_j-x_i)(e^{\thf(x_i+x_j)}
    \one_{\|P'x\|\leq 1}-\one_{\|x\|\leq 1})d\nu(x)\,,
  \end{equation}
  where $P'$ is $P$ with the last column omitted.
\end{corollary}

The following theorem shows that the condition on the drift $\gamma$
from the L\'evy triplet is automatically satisfied in case of equal
means.

\begin{theorem}
  \label{thr:rnc}
  Let $\eta=e^\xi$ be an $n$-dimensional integrable random vector with
  infinitely divisible $\xi$ and such that $\E\eta_i=\E\eta_j$.  Then
  $\eta$ is $ij$-swap-invariant if and only if the characteristic
  triplet $(A,\nu,\gamma)$ of $\xi$ satisfies the first two conditions
  of Corollary~\ref{cor:non-weighted} (i.e.~(\ref{eq:third-no-zeta})
  always holds in this case).
\end{theorem}
\begin{proof}
  Since $\E\eta_l=\phi_\xi^\Q(-\imagi e_l)$ for $l=i,j$ (where here $e_l\in\R^n$),
  \begin{multline*}
    \gammaQ_i+\thf a_{ii}+\int_{\R^n}(e^{x_i}-1-x_i\one_{\| x\|\leq 1})
    d\nuQ(x)\\=\gammaQ_j+\thf a_{jj}+\int_{\R^n}(e^{x_j}-1-x_j\one_{\| x\|\leq 1})
    d\nuQ(x)\,.
  \end{multline*}
  In this case (\ref{eq:third-no-zeta}) turns into
  \begin{displaymath}
    \int_{\R^n}(e^{\thf(x_i-x_j)}-e^{\thf(x_j-x_i)}
    +(x_j-x_i)\one_{\|P'x\|\leq 1})d\bar\nu(x)=0\,.
  \end{displaymath}
  Changing variable as $x=x'+x''$ with $x'\in\HH$ and
  $x''\in\HH^\perp$ and noticing that $\HH^\perp=\{t\onev:\; t\in\R\}$
  consists of vectors with all equal components, the integral turns
  into an integral over $\HH$ with respect to the projection of
  $\bar\nu$ onto $\HH$. The integrand changes the sign if $x$ is
  replaced by $\pi_{ij}x$, while the projected measure $\bar\nu$ is
  invariant on $\HH\setminus\{0\}$ (where the integrand is non-vanishing)
   under this change. Thus, the whole integral vanishes.
\end{proof}

\begin{remark}[Risk-neutral non-weighted case]
  It is worth noticing that the assumption $\E\eta_i=\E\eta_j$ in
  Theorem~\ref{thr:rnc} is satisfied in a risk-neutral setting, where
  $\E\eta_l=1$, $l=1,\dots,n$.
\end{remark}

\begin{example}[Two-asset case]
  \label{cor:twod}
  In the bivariate non-weighted infinitely divisible (L\'evy) case the
  first condition of Corollary~\ref{cor:non-weighted} is vacuous.  The
  second condition holds, e.g.\ for exchangeable $\nu$, while the
  third one always holds in the risk-neutral setting.
\end{example}

\begin{example}[Log-normal distribution]
  \label{rem:gbm}
  If the L\'evy measure vanishes, the first condition of
  Theorem~\ref{th:inv-div-prime} remains the same, the second
  condition always holds, while the third one becomes (with $\mu$
  written instead of $\gamma$)
  \begin{displaymath}
    \mu_i-\mu_j=\thf(a_{jj}-a_{ii})+\sum_{k=1}^{n+1}(a_{jk}-a_{ik})v_k\,.
  \end{displaymath}
  Under a risk-neutral assumption this condition means that
  $\sum_{k=1}^{n+1}(a_{jk}-a_{ik})v_k=0$, so in the non-weighted risk-neutral
  setting only the first condition of Corollary~\ref{cor:non-weighted}
  is imposed.

  In particular, \emph{each bivariate} risk-neutral
  log-normal distribution is (non-weighted) swap-invariant, no matter
  what volatilities of the assets and correlation are. In the
  non-weighted risk-neutral setting with $n=3$ and $i=1$, $j=2$ the only
  condition is
  \begin{displaymath}
    a_{31}-a_{32}=\thf(a_{11}-a_{22})\,.
  \end{displaymath}

  In the presence of a weight $(\xi_1,\xi_2,\zeta)$, i.e.\ $\vartheta=\zeta$,
  in the risk-neutral case the only condition $a_{13}=a_{23}$ on the
  covariance matrix of $(\xi_1,\xi_2,\zeta)$ guarantees the weighted
  swap-invariance property
  \begin{displaymath}
    \E e^{\zeta}(u_1\eta_1+u_2\eta_2)_+
    =\E e^{\zeta}(u_2\eta_1+u_1\eta_2)_+,\quad (u_1,u_2)\in\R^2\,.
  \end{displaymath}
  If the weight is determined by the prices of the assets included in
  the swap, namely $\vartheta=\langle v,\xi\rangle$, the
  swap-invariance condition reads
  $v_1a_{11}-v_2a_{22}=(v_1-v_2)a_{12}$.

  Consider a higher dimensional risk-neutral log-normal setting with
  the weight $\vartheta=\langle v,\xi\rangle$ determined by the assets
  included in $\fbo(u)$ in a rather general way with $v\notin\HH$ and
  $v_i=v_j$ (in particular, $v=e_k$ with $k\neq i,j$). Then
  $\eta\in\WSI_{ij}(\langle v,\xi\rangle)$ implies the
  $ij$-exchangeability of $\eta$. Indeed, the risk-neutrality reduces
  (\ref{eq:th-third}) to $\sum_{k=1}^n(a_{jk}-a_{ik})v_k=0$. By
  $v_i=v_j$ and~(\ref{eq:asi}) this yields that
  $\thf(a_{ii}-a_{jj})\langle\onev,v\rangle=0$, whence $a_{ii}=a_{jj}$
  by $v\notin\HH$.  Taking into account~(\ref{eq:asi}), we have also
  $a_{li}=a_{lj}$ for all $l\neq i,j$, so that the exchangeability
  follows from Proposition~\ref{th:inv-div-CE}.
\end{example}

\begin{remark}[Square integrable case and covariance]
  \label{re:korr}
  Condition~(1) in Theorem~\ref{th:inv-div-prime} yields a certain
  restriction on the correlation structure arising from the centred
  Gaussian term for $n\geq 3$, while for $n=2$ there are no restrictions.
  In order to relax the restrictions also for higher-dimensional
  models, it is useful to introduce a jump component. Assume that
  $\int_{\|x\|>1}\|x\|^2d\nu(x)<\infty$, i.e.\ $(\xi,\zeta)$
  is square-integrable. Then the covariance matrix of
  $(\xi,\zeta)$ has elements
  \begin{displaymath}
    \Sigma_{lj}=\left(a_{lj}+\int x_lx_jd\nu(x)\right)\,,\quad
    l,j=1,\dots,n+1\,,
  \end{displaymath}
  see~\cite[Ex.~25.12]{sat99}. Thus, despite of some
  constrains on the L\'evy measure given in
  Theorem~\ref{th:inv-div-prime} (resp.\
  Corollary~\ref{cor:non-weighted}), there is more
  flexibility in modelling the correlation structure of $\eta$.
\end{remark}

\begin{remark}[L\'evy measures based on exchangeability]
  \label{rem:lm}
  The image of the L\'evy measure $\hat\nu$ under $P$ (resp.\ the
  projection of $\bar\nu$ on $\HH\setminus\{0\}$) is $\pi_{ij}$-invariant if (but
  not only if) the L\'evy measure $\hat\nu$ (resp.\ $\bar\nu$) is
  $\pi_{ij}$-invariant itself. Simple example of L\'evy measures
  satisfying Theorem~\ref{th:inv-div-prime}(2) (resp.\
  Corollary~\ref{cor:non-weighted}(2)) can be constructed by
  taking an $(n+1)$-dimensional (resp.\ $n$-dimensional)
  $ij$-exchangeable (i.e.\ $\pi_{ij}$-invariant) L\'evy measure $\tilde\nu$
  (resp.\ $\bar\nu$) satisfying (\ref{eq:nu-conditions}) and defining
  $\nu$ from (\ref{eq:nu-Essc}) or (\ref{eq:bar-nu}) given that the
  imposed integrability assumptions on $e^{\langle v,(\xi,\zeta)\rangle}\eta$ and
  $\eta$ are satisfied.
\end{remark}

\begin{example}[Compound Poisson distribution]
  Assume that the L\'evy measure is finite with existing first
  exponential moments. Without loss of generality assume that its
  total mass is one. Then $\bar\nu$ from
  Corollary~\ref{cor:non-weighted} is, up to a constant, the Esscher
  transform of $\nu$ with parameter $\thf e_{ij}$. Thus, the invariance
  of its projection onto $\HH$ is equivalent to
  \begin{displaymath}
    \varphi_\nu(u-\imagi\thf
    e_{ij})=\varphi_\nu(\pi_{ij}u-\imagi\thf e_{ij})\,,\quad
    u\in\HH\,,
  \end{displaymath}
  for the characteristic function of $\nu$, which exactly corresponds
  to~(\ref{eq:half}).  Hence, the distribution of the logarithm of any
  $ij$-swap-invariant vector $\eta$ can be chosen to serve as the
  L\'evy measure $\nu$ (where $\nu(\{0\})$ is set to zero if $\eta$
  has an atom at $(1,\dots,1)$).  For instance, L\'evy measures
  satisfying~(\ref{eq:bar-nu}) can be created from normal
  distributions described in Example~\ref{rem:gbm}.
  In the bivariate case this imposes only a slight restriction on the
  expectations, while the variances and correlation are not
  restricted.
\end{example}

\begin{example}[Swap-invariance in bivariate generalized hyperbolic
  models]
  Consider a risk-neutral bivariate generalised hyperbolic case, i.e.\
  $\eta=e^{\xi}$, where $(\xi_1,\xi_2)\sim\mathrm{GH}_2(\lambda,\alpha,\beta,\delta,\mu,\Delta)$,
  cf.~\cite{bar77}, with corresponding parameters $\lambda\in\R$,
  $\alpha,\delta\in\R_+$, $\mu,\beta\in\R^2$, and $\Delta$ is a
  symmetric, positive definite, $2\times 2$ matrix, where w.l.o.g. $\det(\Delta)=1$.
  Following~\cite[Ex.~5.9]{eber:pap:shir08b} based on~\cite{mas04}
  assume that $\delta>0$ and $\alpha^2-\langle\beta,\Delta\beta\rangle>0$ so that
  the moments of all orders exist and the L\'evy measure $\nu$ has a density $\nu(x)$ given by
  \begin{multline*}
    \nu(x)=\frac{e^{\langle\beta,x\rangle}}{\pi\sqrt{\langle
    x,\Delta^{-1}x\rangle}}\Big(\int_0^\infty\frac{\sqrt{2y+\alpha^2}
    K_1(\sqrt{(2y+\alpha^2)\langle
    x,\Delta^{-1}x\rangle})}{\pi^2y(J^2_{|\lambda|}(\delta\sqrt{2y})
    +Y^2_{|\lambda|}(\delta\sqrt{2y}))}\,dy\\
    +\alpha K_1(\alpha\sqrt{\langle x,\Delta^{-1}
    x\rangle})\lambda\one_{\{\lambda>0\}}\Big)\,,
  \end{multline*}
  where $J_\iota$, $Y_\iota$ and $K_\iota$ denote the (modified)
  Bessel functions of first, second and third kind with index
  $\iota$, 
  and where further conditions on the parameters for
  ensuring the existence of the exponential moments can
  immediately be obtained from~\cite[Rem.~2.2]{ham04}.

  The parameters for $\tilde\xi=\xi_2-\xi_1$ under $\Q^1$ are
  calculated in~\cite[Ex.~5.9]{eber:pap:shir08b}, in particular
  \begin{displaymath}
    \tilde\beta=\frac{\beta_2\delta_{22}-(\beta_1+1)
    \delta_{11}-\delta_{12}(\beta_2-\beta_1-1)}
    {(\delta_{11}+\delta_{22}-2\delta_{12})}\,.
  \end{displaymath}
  By Theorem~\ref{th:swap-inv-self-duality} or Example~\ref{eg:swap-inv-self-duality}
  $(e^{\xi_1},e^{\xi_2})$ is swap-invariant if and only if $e^{\tilde\xi}$ is
  self-dual under $\Q^1$. However, following~\cite{faj:mor06b}
  in the risk-neutral setting this is the case if and only if
  $\tilde\beta=-\thf$ so that we obtain the slight restriction
  \begin{displaymath}
    2(\delta_{22}-\delta_{12})\beta_2+\delta_{22}
    =2(\delta_{11}-\delta_{12})\beta_1+\delta_{11}\,.
  \end{displaymath}
  Hence, the considerable effective degrees of freedom for modelling
  two assets based on the considered dependent generalised hyperbolic
  L\'evy processes only slightly diminishes by extra imposing 
  the bivariate swap-invariance property holds.
\end{example}

By interpreting $(\xi,\zeta)$ (resp.\ $\xi$) as time one value of
a L\'evy process we arrive at the following result.

\begin{corollary}
  \label{cor:levy-all-T-gen-quant-inv}
  If $(\xi_t,\zeta_t)$ (resp.\ $\xi_t$), $t\geq0$, is the L\'evy process
  with generating triplet $(A,\nu,\gamma)$ that satisfies the conditions of
  Theorem~\ref{th:inv-div-prime} (resp.\ Corollary~\ref{cor:non-weighted}),
  then $e^{\xi_t}$ is weighted $ij$-swap-invariant for all $t\geq0$.
\end{corollary}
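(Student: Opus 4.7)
The plan is to invoke Theorem~\ref{th:inv-div-gen-quant-inv} at each fixed $t\geq0$ by exhibiting the generating triplet of $(\xi_t,\zeta_t)$ and checking that its three conditions survive. Recall that for a L\'evy process whose time-$1$ value has triplet $(A,\nu,\gamma)$, the vector $(\xi_t,\zeta_t)$ is infinitely divisible with triplet $(tA,t\nu,t\gamma)$ (see~\cite[Ch.~2]{sat99}), so the task reduces to showing that each of the three conditions of Theorem~\ref{th:inv-div-gen-quant-inv} is invariant under the rescaling $(A,\nu,\gamma)\mapsto(tA,t\nu,t\gamma)$.

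First I would dispatch condition~(1): it is the linear homogeneous identity $a_{li}-a_{lj}=\thf(a_{ii}-a_{jj})$ and is trivially preserved by $A\mapsto tA$. For condition~(2), I would note that the Esscher-type reweighting in~(\ref{eq:nu-Essc}) and the push-forward under $P$ are both linear in the underlying measure, so scaling $\nu$ by $t$ scales $\hat\nu P^{-1}$ by $t$, leaving $\pi_{ij}$-invariance intact. Condition~(3) is the most structurally sensitive, but both sides of~(\ref{eq:th-third}) are homogeneous of degree one in $(\gamma,A,\nu)$ jointly, so multiplying all three by $t$ preserves the identity.

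Before concluding, I would handle the integrability hypotheses that Theorem~\ref{th:inv-div-gen-quant-inv} needs and that make $\WSI_{ij}(\zeta_t)$ meaningful. Given the assumed finiteness of $\E e^{\zeta_1}$ and of each coordinate of $\E e^{\zeta_1}\eta_1$, the same exponential moments at time $t$ follow from the L\'evy-process identity $\varphi_{(\xi_t,\zeta_t)}(u)=\varphi_{(\xi_1,\zeta_1)}(u)^{t}$ extended to the relevant complex arguments under the imposed moment conditions, as in~\cite[Th.~25.17]{sat99}. Theorem~\ref{th:inv-div-gen-quant-inv} then yields $e^{\xi_t}\in\WSI_{ij}(\zeta_t)$ for every $t\geq 0$.

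The parallel, unweighted assertion uses the same rescaling argument applied to Theorem~\ref{th:inv-div-gen-quant-inv-lin}: the linear identity~(\ref{eq:asi-lin}), the $\pi_{ij}$-invariance of the projection of the measure in~(\ref{eq:bar-nu}), and the degree-one homogeneous identity~(\ref{eq:th-third-lin}) are all unaffected by $(A,\nu,\gamma)\mapsto(tA,t\nu,t\gamma)$. I do not expect a substantive obstacle; the entire content of the corollary is the observation that the defining conditions of weighted swap-invariance in the L\'evy setting are homogeneous in the triplet, so the property propagates automatically from $t=1$ to every $t\geq 0$.
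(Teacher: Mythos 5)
Your proposal is correct and follows essentially the same route the paper intends: the paper's one-line justification ``By interpreting $(\xi,\zeta)$ (resp.\ $\xi$) as time one value of a L\'evy process'' is precisely the observation that $(\xi_t,\zeta_t)$ has generating triplet $(tA,t\nu,t\gamma)$, and your check that conditions~(1)--(3) of Theorem~\ref{th:inv-div-gen-quant-inv} (resp.\ Theorem~\ref{th:inv-div-gen-quant-inv-lin}) are homogeneous of degree one in the triplet, together with the time-independence of the exponential moment conditions via \cite[Th.~25.17]{sat99}, is exactly the content left implicit there.
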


\begin{remark}[Random times]
  \label{rem:tau-T-mult}
  Consider a family $\{\eta(t),\,t\geq0\}$ of $ij$-swap-invariant
  random vectors. Let $\tau_t$, $t\geq0$, be an increasing
  non-negative random function independent of $\eta$. If the
  time-changed stochastic process $\eta(\tau_t)$, $t\geq0$, is
  integrable for all $t$, then $\eta(\tau_t)$ is also
  $ij$-swap-invariant.
\end{remark}

\section{Quasi-swap-invariance}
\label{sec:quasi-swap}

In some cases the swap-invariance condition is too restrictive, in
particular, its relaxed variant is useful to adjust for
\emph{unequal carrying costs}. We say that $\eta$ is
\emph{quasi-swap-invariant} if
\begin{equation}
  \label{eq:qsi}
  \E [e^\vartheta\fbo(u)]
  =\E\left[e^\vartheta\fbo(\pi_{ij}(u))(\frac{\eta_i}{\eta_j})^\alpha\right]
\end{equation}
for all $u\in\R^n$ and all mentioned expectations exist. Note that
this property is not symmetric with respect to $i$ and $j$.

By passing to the new probability measure $\tilde\Q^j$ defined
by~(\ref{eq:change-measure}) with $w=e_j+e_{n+1}$, for
$\vartheta=\zeta$, respectively with $w=e_j+\langle
v,(\xi,\zeta)\rangle$ for $\vartheta=\langle
v,(\xi,\zeta)\rangle$,
assuming the $\tilde\Q^j$-integrability of $\tilde\kappa_j(\eta)$ as
well as $\kappat_j(\eta)^{\alpha+1}$ and
using~\cite[Th.~5.2]{mol:sch10} (with vanishing
$\lambda$) 
it is easy to see that~(\ref{eq:qsi}) with $\alpha\neq -1$ is
equivalent to the fact that $\kappat_j(\eta)^{\alpha+1}$ is self-dual
with respect to the $i$th numeraire under $\tilde\Q^j$.  Random
vectors that become self-dual if normalised and raised to some power
are called \emph{quasi-self-dual} in~\cite{mol:sch10}.

\begin{theorem}
  \label{th:qusi-char-on-chaf-f}
  Let $\eta=e^\xi$ be a random vector such that $e^{\vartheta}\eta$ and
  $e^{\vartheta}(\eta_i/\eta_j)^\alpha \eta$ are integrable.
  Then~(\ref{eq:qsi}) holds if and only if
  \begin{equation}
    \label{eq:quasi-v}
    \varphi_{\xi+\onev\vartheta}(u-\imagi w)
    =\varphi_{\xi+\onev\vartheta}(\pi_{ij}(u-\imagi
    w)-\imagi\alpha(e_i-e_j))
  \end{equation}
  for all $u\in\HH$ and for at least one (and then necessarily for all)
  $w\in\Delta$.
\end{theorem}
\begin{proof}
  Define $\vartheta'=\vartheta+\alpha\xi_i-\alpha\xi_j$ and note that
  $e^{\vartheta}\eta$ and $e^{\vartheta'}\eta$ are integrable. Then
  (\ref{eq:qsi}) means that $e^\vartheta\eta$ and $\pi_{ij}e^{\vartheta'}\eta$ share
  the same zonoid.  By Theorem~\ref{th:char-equal-zonoids}, this holds
  if and only if
  \begin{displaymath}
    \phi_{\xi+\onev\vartheta}(u-\imagi w)
    =\phi_{\pi_{ij}(\xi+\onev\vartheta')}(u-\imagi w)
  \end{displaymath}
  for all $u\in\HH$ and for at least one (and then necessarily for
  all) $w\in\Delta$. It remains to note that the right-hand side
  is
  \begin{displaymath}
    \phi_{\xi+\onev\vartheta'}(\pi_{ij}(u-\imagi w))
    =\phi_{\xi+\onev\vartheta}(\pi_{ij}(u-\imagi w)-\imagi\alpha(e_i-e_j))\,.
  \end{displaymath}
\end{proof}

Since $e^\vartheta\eta$ and $\pi_{ij}e^{\vartheta'}\eta$ share the
same zonoid, Theorem~\ref{thr:pos-1-hom-gen-two} yields that
(\ref{eq:qsi}) implies
\begin{equation}
  \label{eq:pos-quasi}
  \E[e^{\vartheta}g(\eta)]
  =\E\left[e^\vartheta
    g(\pi_{ij}(\eta))(\frac{\eta_i}{\eta_j})^\alpha\right]
\end{equation}
for all $g\in\sH_1$.

The assumption $w\in\Delta$ in Theorem~\ref{th:qusi-char-on-chaf-f}
can be replaced by assuming that $\sum w_k=1$ and at least one side of
(\ref{eq:quasi-v}) is finite. Note that the integrability of
$e^{\vartheta}\eta$ and $e^{\vartheta}(\eta_i/\eta_j)^\alpha \eta$
implies $\E
e^{\thf(1+\alpha)\xi_i+\thf(1-\alpha)\xi_j+\vartheta}<\infty$.  Hence,
we can choose $w=\thf(1+\alpha)e_i+\thf(1-\alpha)e_j$, so that
(\ref{eq:quasi-v}) turns into
\begin{multline}
  \label{eq:quasi-v-special-w}
  \varphi_{\xi+\onev\vartheta}(u-\imagi[\thf[1+\alpha]e_i+\thf[1-\alpha]e_j])\\
  =\varphi_{\xi+\onev\vartheta}(\pi_{ij}u-\imagi[\thf[1+\alpha]e_i+\thf[1-\alpha]e_j])
\end{multline}
for all $u\in\HH$, i.e.\ the complex shifts on both sides of
(\ref{eq:quasi-v-special-w}) are identical. For $\vartheta=\langle
v,(\xi,\zeta)\rangle$ we can use that $u\in\HH$ in order to
express~(\ref{eq:quasi-v-special-w}) in terms of the joint
characteristic function of $(\xi,\zeta)$ as
\begin{multline}
  \label{eq:char-gen-quanto-inv-w-char-f-ref-quasi}
  \varphi_{(\xi,\zeta)}((u,0)-\imagi[\thf e_{ij}+\thf\alpha(e_i-e_j)+v])\\=
  \varphi_{(\xi,\zeta)}(\pi_{ij}(u,0)-\imagi[\thf e_{ij}+\thf\alpha(e_i-e_j)+v])
\end{multline}
for all $u\in\HH$.
Hence,~(\ref{eq:char-gen-quanto-inv-w-char-f-ref-quasi})
corresponds to~(\ref{eq:char-gen-quanto-inv-w-char-f-ref}) written for
\begin{displaymath}
  v'=v+\frac{\alpha}{2}(e_i-e_j)
\end{displaymath}
instead of $v$. Thus, in the infinite divisible case under suitable
integrability assumptions, the quasi-swap-invariance holds if and only
if conditions of Theorem~\ref{th:inv-div-prime} are satisfied with $v$
replaced by $v'$ given above, so that we immediately obtain the
following result.

\begin{corollary}
  \label{cor:quasi-levy}
  Let $\eta=e^\xi$ be a random vector such that $e^{\langle v,(\xi,\zeta)\rangle}\eta$
  and $e^{\langle v,(\xi,\zeta)\rangle}(\eta_i/\eta_j)^\alpha \eta$ are integrable
  for some $v\in\R^{n+1}$ and with $(\xi,\zeta)$ being infinitely
  divisible. Then $e^\xi$ is quasi-swap-invariant of order $\alpha$ (i.e.\
  satisfies~(\ref{eq:qsi}) with $\vartheta=\langle v,(\xi,\zeta)\rangle$) if and only if the characteristic triplet
  $(A,\nu,\gamma)$ of $(\xi,\zeta)$ satisfies the following conditions.
  \begin{itemize}
   \item[(1)] If $n\geq3$, the matrix $A$ satisfies
     \begin{equation}
       \label{eq:asi-quasi}
       a_{li}-a_{lj}=\thf(a_{ii}-a_{jj})
     \end{equation}
     for all $l\neq i,j$, $l\leq n$.
   \item[(2)] The image of $\hat\nu P^{-1}$ under $P$ of measure
     \begin{equation}
       \label{eq:nu-Essc-quasi}
       d\hat\nu(x)=e^{\langle v+\frac{1+\alpha}{2}e_i+\frac{1-\alpha}{2}e_j,x\rangle}d\nu(x)\,,
     \end{equation}
     is $\pi_{ij}$-invariant on $\HH\setminus\{0\}$.
  \item[(3)] $\gamma$ satisfies
    \begin{multline}
       \label{eq:drift-quasi-noncc}
        \gamma_i-\gamma_j=\thf(a_{jj}-a_{ii})
        -\frac{\alpha}{2}(a_{ii}+a_{jj}-2a_{ij})+\sum_{k=1}^{n+1}(a_{jk}-a_{ik})v_k\\
        +\int_{\R^{n+1}}(x_j-x_i)
        (e^{\langle v+\frac{1+\alpha}{2}e_i
        +\frac{1-\alpha}{2}e_j,x\rangle}\one_{\|Px\|\leq 1}-\one_{\|x\|\leq 1}) d\nu(x)\,.
\end{multline}
  \end{itemize}
\end{corollary}

For some applications, notably for semi-static hedging of barrier
options with unequal carrying costs, the symmetry should be
imposed on price changes adjusted with carrying costs. Unlike
equity markets, where the assumption of equal carrying costs is
often not totally unrealistic (e.g.\ in dividend-free cases), this
assumption is quite restrictive in currency markets, since the
risk-free interest rates in different countries usually differ.
The carrying costs on various assets amount to componentwise
multiplication of $\eta$ by a vector
$e^\lambda=(e^{\lambda_1},\dots,e^{\lambda_n})$, where
$\lambda_i=r-r_i$, $i=1,\dots,n$. In currency trading $r_i$
denotes the risk-free interest rate in the $i$th foreign market,
while in the share case it becomes the dividend yield of the $i$th
share. If useful, $\lambda$ can also have other interpretations
than being the pure carrying costs and $\eta$ need not be a
one-period martingale itself.

Multiplying $\eta$ with a vector representing unequal carrying
costs affects the (weighted) $ij$-swap-invariance
property~(\ref{eq:weighted-prop}). However, in some cases it is
possible to find $\alpha$ such that (\ref{eq:qsi}) holds, i.e.\
$\eta=e^{\xi+\lambda}$ is quasi-swap-invariant. In this case
$\xi+\lambda$ instead of $\xi$
satisfies~(\ref{eq:quasi-v-special-w}).
In the infinitely divisible case the only new condition on the
L\'evy triplet of $(\xi+\lambda,\zeta)$ concerns the ``drifts'':
\begin{multline}
  \label{eq:drift-quasi-cc}
  \gamma_i-\gamma_j=\thf(a_{jj}-a_{ii})
  -\frac{\alpha}{2}(a_{ii}+a_{jj}-2a_{ij})+\sum_{k=1}^{n+1}(a_{jk}-a_{ik})v_k\\
  +\int_{\R^{n+1}}(x_j-x_i)(e^{\langle v+\frac{1+\alpha}{2}e_i
        +\frac{1-\alpha}{2}e_j,x\rangle}\one_{\|Px\|\leq 1}-\one_{\|x\|\leq 1}) d\nu(x)+\lambda_j-\lambda_i
  \,.
\end{multline}
Note that this condition only depends on the carrying costs of the
$i$th and $j$th assets. 
If $\vartheta$ vanishes, then the condition on the drift
simplifies to
\begin{multline*}
  \gamma_i-\gamma_j=\thf(a_{jj}-a_{ii})
  -\frac{\alpha}{2}(a_{ii}+a_{jj}-2a_{ij})\\
  +\int_{\R^n}(x_j-x_i)(e^{\frac{1+\alpha}{2}x_i
    +\frac{1-\alpha}{2}x_j}\one_{\|P'x\|\leq 1}-\one_{\|x\|\leq 1}) d\nu(x)+\lambda_j-\lambda_i\,.
\end{multline*}

\begin{remark}[Determining $\alpha$ from the L\'evy triplet and the
  carrying costs]
  \label{rem:alpha}
  Consider $\eta=e^{\xi+\lambda}$ that satisfies (\ref{eq:qsi}) with
  given $\lambda$ and $\vartheta=\zeta$ such that $(\xi,\zeta)$ is
  infinitely divisible.  Note that neither (weighted)
  $ij$-swap-invariance nor the more general quasi-swap-invariance
  condition~(\ref{eq:qsi}) imply $\E e^{\xi_j}=1$. Thus, for many
  applications one additionally assumes that $\E
  e^{\xi_l}=\varphi_{(\xi,\zeta)}(-\imagi e_l)=1$ for all
  $l$ and also that $\E e^\zeta=1$. In particular, this implies
  \begin{equation}
    \label{eq:matr-for-alpha}
    \gamma_l=-\thf
    a_{ll}-\int_{\R^{n+1}}(e^{x_l}-1-x_l\one_{\|x\|\leq 1})d\nu(x)\,,
    \qquad l=i,j\,.
  \end{equation}
  Plug~(\ref{eq:matr-for-alpha}) in~(\ref{eq:drift-quasi-cc}) in order
  to see that $\alpha$ satisfies
  \begin{multline}
    \label{eq:levy-alpha}
    \alpha (a_{ii}+a_{jj}-2a_{ij})=2(\lambda_j-\lambda_i)+2(a_{j(n+1)}-a_{i(n+1)})\\
    +2\int_{\R^{n+1}}(e^{x_i}-e^{x_j}+(x_j-x_i)
    e^{\langle e_{n+1}+\frac{1+\alpha}{2}e_i
        +\frac{1-\alpha}{2}e_j,x\rangle}\one_{\|Px\|\leq 1})d\nu(x)\,.
  \end{multline}
  In the non-weighted case this condition can be written as
  \begin{multline*}
    \alpha(a_{ii}+a_{jj}-2a_{ij})\\=2(\lambda_j-\lambda_i)
    +2\int_{\R^n}(e^{x_i}-e^{x_j}+(x_j-x_i)
    e^{\frac{1+\alpha}{2}x_i+\frac{1-\alpha}{2}x_j}\one_{\|P'x\|\leq 1})d\nu(x)\,.
  \end{multline*}
  In the L\'evy processes setting the values of $\alpha$ calculated
  from the distributions at any time moment $t\geq0$ coincide.
\end{remark}

\begin{example}[Black--Scholes setting]
  \label{ex:q-two}
  In the absence of jumps it is easily possible to explicitly derive $\alpha$
  from~(\ref{eq:levy-alpha}).  Namely, if $\nu$ vanishes and $A$
  satisfies~(\ref{eq:levy-alpha}), with $a_{ii}+a_{jj}\neq 2a_{ij}$, then
  \begin{equation}
    \label{eq:rbs}
    \alpha=2\frac{\sum_{k=1}^{n+1}(a_{jk}-a_{ik})v_k+\lambda_j-\lambda_i}{a_{ii}+a_{jj}-2a_{ij}}\,,
  \end{equation}
  which in the non-weighted case simplifies to
  \begin{displaymath}
    \alpha=\frac{2(\lambda_j-\lambda_i)}{a_{ii}+a_{jj}-2a_{ij}}\,.
  \end{displaymath}
  In the bivariate Black--Scholes case this result has been derived
  in~\cite{schm10qf} by directly using a symmetry result in univariate Black--Scholes
  markets. Section~\ref{sec:hedg} shows that this
  result can be used for semi-statically hedging certain
  generalised swap-options in certain (in the bivariate case all)
  Black--Scholes economies.
\end{example}

Example~\ref{ex:q-two} demonstrates that turning to the more
general quasi-swap-invariance concept also in the equal carrying
cost case ($\lambda_i=\lambda_j$) yields considerably more
flexibility for modelling the asset prices. In particular for
e.g.\ $v=e_3$, each three-asset Black--Scholes model is
quasi-swap-invariant with $\alpha$ determined from (\ref{eq:rbs}).

\section{Hedging multi-asset barrier options}
\label{sec:hedg}

In this section we show how the analysed symmetry properties can
be used in order to create semi-static hedging strategies for
several multi-asset options. First we derive in
Section~\ref{sec:gen-hedge} a general hedging strategy for rather
general options in (weighted) quasi-swap-invariant models
extending results obtained in~\cite{schm10qf}, before applying them
to well-known options in Section~\ref{sec:ill-ex}. We will also
discuss examples where the more restrictive $ij$-exchangeability
property is needed.

It should be noted that the suggested hedging strategies are only
practicable provided that the instruments involved in the hedge
are liquid or can be replicated by liquid instruments.
Decompositions of not sufficiently liquid instruments in
over-the-counter traded claims is an active area of current
research and lies beyond the scope of this article. In this
relation, Carr and Laurence~\cite{car:laur09} write that ``all
major banks stand ready to provide over-the-counter quotes on
customised baskets'' and mention a decomposition possibility of
multivariate European payoff function in basket options, thereby
generalising results of Lipton~\cite{lip01} in the bivariate case,
see also \cite{baxt98,hen:sha90}. 
An easy decomposition formula for a large family of bivariate European
payoff functions in other over-the-counter traded instruments is given
in~\cite{sch:zue10}, while the results of \cite{bal:rut10} (in an
appropriately adjusted interpretation) yield further decompositions in
bivariate binary and certain bivariate correlation options.  There is
also a fast growing literature about sub- and super-replication of
basket options, see e.g.~\cite{lau:wan08,pen:say:ver:zul10} and the
literature cited therein. Furthermore, it is sometimes also possible
to increase the liquidity of the involved instruments by implementing
the hedges in a foreign derivative market or by using decomposition
methods, similarly to~\cite{eber:pap:shir08b} and \cite{schm10qf}.
Note that in special cases with equal (but not necessarily vanishing)
carrying costs our hedging instruments are already of the form of
exchange or basket options, see e.g.\ the swap-invariant version of
Example~\ref{ex:barrier-swap} or the non-quanto version of
Example~\ref{ex:hedg-on-ex}, respectively.

\subsection{A general hedging strategy}
  \label{sec:gen-hedge}

Consider a multivariate finite horizon model with the asset prices
given by
\begin{displaymath}
  (S_t,Z_t)=(S_0\circ e^{\tilde\xi_t},Z_0e^{\tilde\zeta_t})
  \,,\quad t\in[0,T]\,,
\end{displaymath}
where $(\tilde\xi_t,\tilde\zeta_t)=(\lambda t+\xi_t,\mu t+\zeta_t)$ is
a L\'evy process such that all components of $(e^{\xi_t},e^{\zeta_t})$
are martingales defined on a filtered probability space
$(\Omega,\salg,(\salg_t)_{t\in[0,T]},\Q)$ with the usual conditions
imposed on the filtration. Note that the vector
$(\lambda,\mu)\in\R^{n+1}$ represents deterministic carrying costs.

Fix $i<j$, $i,j\in\{1,\dots,n\}$, and assume that for every
$t\in[0,T]$, $(e^{\tilde\xi_t},e^{\tilde\zeta_t})$ satisfies the
quasi-swap-invariance property~(\ref{eq:qsi}) with
$\eta=e^{\tilde\xi_t}$ and $\vartheta=\tilde\zeta_t$, in particular,
this involves the integrability of $Z_tS_t$ and
$Z_tS_t(S_{ti}/S_{tj})^\alpha$ for all $t\in[0,T]$.

Note that in real market applications often neither the ``symmetry
assumption'' nor the assumption that the asset prices follow
multivariate componentwise exponentials of L\'evy processes will
typically be completely fulfilled. However, in the univariate case
several comparative studies, see
e.g.~\cite{car:wu06,eng:fen:nal:schw06,nal:pou06}, have confirmed
a relatively good performance of symmetry based semi-static
hedges, even if the assumptions behind the semi-static hedges are
not satisfied exactly.\footnote{We thank an anonymous referee for
this hint.}

Consider a payoff function $g\in\sH_1$ weighted by the terminal price
$Z_T$ of the $(n+1)$st asset, satisfying $\E |Z_Tg(S_T)|<\infty$, with
knock-in features given by the claims
\begin{equation}
  \label{eq:weighted-X-gen}
  X=Z_T g(S_T)\one_{\tau\leq T}\,,
\end{equation}
where
\begin{displaymath}
   \tau=\inf\{t : cS_{ti}\,\substack{\leq\\ \geq}\,S_{tj}\}\,.
\end{displaymath}
Note that we simultaneously handle the two knock-in cases
corresponding to the crossing of barrier from below or from above by
the ratio process $S_{tj}/S_{ti}$, choosing the appropriate inequality
in the indicator event. We assume that for the crossing from below
(resp. above) case the spot ratio $S_{0j}/S_{0i}$ lies below (resp.
above) the barrier.

Assume that the ratio process can not jump over the barrier $c$.
Then we can semi-statically replicate $X$ by the following (path
independent) European claim
\begin{equation}
  \label{eq:hedge-pf-gen}
  G(S_T,Z_T)=Z_T g(S_T)\one_B
  +Z_T g(\hat\kappa(c,S_T))\big(c\frac{S_{Ti}}{S_{Tj}}\big)^\alpha
  \one_{B_0}\,,
\end{equation}
where $B=\big\{cS_{Ti}\substack{\leq\\ \geq} S_{Tj}\big\}$,
$B_0=\big\{cS_{Ti}\substack{<\\>}S_{Tj}\big\}$ and
\begin{displaymath}
  \hat\kappa(c,S_T)=\big(S_{T1},\dots,S_{T(i-1)},
  \frac{S_{Tj}}{c},S_{T(i+1)},\dots,S_{T(j-1)},cS_{Ti},S_{T(j+1)},\dots,S_{Tn}\big)\,.
\end{displaymath}
In order to justify this hedge note that on the event $\{\tau>T\}$,
the claim in~(\ref{eq:hedge-pf-gen}) expires worthless as desired. If
the barrier knocks in, then at time $\tau$ we can
exchange~(\ref{eq:hedge-pf-gen}) for a claim on $Z_Tg(S_T)$ at zero
costs. To confirm this write
\begin{displaymath}
  Z_Tg(S_T)=Z_T g(S_T)\one_B+ Z_T g(S_T)\one_{B^c}\,,
\end{displaymath}
so we need to show that the conditional expectations of the second
term in the right-hand side given the stopping $\sigma$-algebra
$\salg_\tau$ coincides with the conditional expectation of the second
term in (\ref{eq:hedge-pf-gen}) on the event $\{\tau\leq T\}$.

Since $(\xi_t,\zeta_t)$ is a L\'evy process,
$((\xi_\tau,\zeta_\tau),(\xi_T,\zeta_T))$ and
\begin{displaymath}
  ((\xi_\tau,\zeta_\tau),(\xi_\tau+\xi'_{T-\tau},\zeta_\tau+\zeta'_{(T-\tau)}))
\end{displaymath}
share the same distribution on the event $\{\tau\leq T\}$, where
$(\xi'_t,\zeta'_t)$, $t\in[0,T]$, is an independent copy of the
process $(\xi_t,\zeta_t)$, $t\in[0,T]$. Hence,
$((S_\tau,Z_\tau),(S_T,Z_T))$ and $((S_\tau,Z_\tau),(S_\tau\circ
\eta'_\sigma,Z_\tau Z'_\sigma)$ also coincide in distribution, where
$(\eta'_t,Z'_t)$, is an independent copy of the process $(\eta_t,Z_t)$
and $\sigma=T-\tau$. The quasi-swap-invariance
property~(\ref{eq:pos-quasi}) together with
Remark~\ref{rem:tau-T-mult} yield that
\begin{align*}
  \E[Z_Tg(S_T)\one_{B^c}|\salg_\tau]
  &=\E\big[Z_\tau Z'_\sigma g(S_\tau\circ \eta'_\sigma)
  \one_{\{cS_{\tau i}\eta'_{\sigma i}\,\substack{>\\<}\,
    S_{\tau j}\eta'_{\sigma j}\}}|\salg_\tau\big]\\
  &=\E\Big[Z_\tau Z'_\sigma g(S_\tau\circ\pi_{ij}\eta'_\sigma)
  \one_{\{cS_{\tau i}\eta'_{\sigma j}\,\substack{>\\<}\,
    S_{\tau j}\eta'_{\sigma i}\}}
  \Big(\frac{\eta'_{\sigma i}}{\eta'_{\sigma j}}\Big)^\alpha|\salg_\tau\Big]\\
  &=\E\Big[Z_\tau Z'_\sigma g(\hat\kappa(c,S_\tau\circ \eta'_\sigma))
  \one_{\{S_{\tau j}\eta'_{\sigma j}\,\substack{>\\<}\,
    cS_{\tau i}\eta'_{\sigma i}\}}
  \Big(\frac{cS_{\tau i}\eta'_{\sigma i}}{S_{\tau_j}
    \eta'_{\sigma j}}\Big)^\alpha|\salg_\tau\Big]\\
  &=\E\Big[Z_T g(\hat\kappa(c,S_T))\big(c\frac{S_{Ti}}{S_{Tj}}\big)^\alpha
  \one_{B_0}|\salg_\tau\Big]
\end{align*}
on the event $\{\tau\leq T\}$. Note that we have used that $S_{\tau
  j}=cS_{\tau i}$.  Hence, on the event $\{\tau\leq T\}$
\begin{displaymath}
  \E[Z_T g(S_T)|\salg_\tau]=\E[G(S_T,Z_T)|\salg_\tau]\,.
\end{displaymath}
The above arguments also can be used to valuate the described barrier
options for models with continuous sample paths in the $i$th and $j$th
component.

Note that if the ratio-process can jump over the barrier, the hedge
in~(\ref{eq:hedge-pf-gen}) is no longer exact.

\subsection{Illustrative examples}
\label{sec:ill-ex}

We will assume in all examples without loss of generality that $i=1$
and $j=2$, and so accordingly speak about $12$-swap-invariance or
$12$-exchangeability.

\begin{example}[Barrier quanto-swap options]
  \label{ex:barrier-quanto-swap}
  Consider a vector of asset prices
  \begin{displaymath}
    S_t=(S_{t1},S_{t2},S_{t3})
    =(S_{01}e^{\lambda_1 t}e^{\xi_{t1}},S_{02}e^{\lambda_2
    t}e^{\xi_{t2}}
    ,S_{03}e^{\lambda_3 t}e^{\xi_{t3}})=S_0\circ e^{\xi_t+\lambda t}
  \end{displaymath}
  with $\lambda=(\lambda_1,\lambda_2,\lambda_3)$ representing the
  carrying costs and
  \begin{displaymath}
    \eta_t=(\eta_{t1},\eta_{t2},\eta_{t3})
     =(e^{\lambda_1t+\xi_{t1}},e^{\lambda_2t+\xi_{t2}},e^{\lambda_3t+\xi_{t3}})=e^{\xi_t+\lambda t}\,,
  \end{displaymath}
  being $12$-quasi-swap-invariant (with the weight given by the third
  asset) for all $t\in [0,T]$, where other conditions remain the same
  as in Section~\ref{sec:gen-hedge}. Consider barrier claims defined
  by
  \begin{align*}
    X_{\mathrm{qsw}} &=S_{T3}(aS_{T1}-bS_{T2})_+\,\one_{\exists
      t\in[0,T],\,
      cS_{t1}\leq S_{t2}}\,,\\
    Y_{\mathrm{qsw}} &=S_{T3}(aS_{T1}-bS_{T2})_+\one_{cS_{t1}>S_{t2}\forall t\in[0,T]}\,,
  \end{align*}
  where $cS_{01}>S_{02}$, the positive parameters satisfy $0<a\leq
  bc$, and
  \begin{equation}
    \label{eq:tau-def}
    \tau=\inf\{t : cS_{t1}\leq S_{t2}\}\,.
  \end{equation}
  By~(\ref{eq:hedge-pf-gen}) the hedge portfolio for
  $X_{\mathrm{qsw}}$ is given by
  \begin{multline*}
    G(S_{T1},S_{T2},S_{T3})\\=S_{T3}(aS_{T1}-bS_{T2})_+\one_{cS_{T1}\leq S_{T2}}
      +S_{T3}(\frac{a}{c}S_{T2}-bcS_{T1})_+
      \Big(c\frac{S_{T1}}{S_{T2}}\Big)^{\alpha}\one_{cS_{T1}< S_{T2}}\,.
  \end{multline*}
  Since $0<a\leq bc$, if $cS_{T1}\leq S_{T2}$, then
  $(aS_{T1}-bS_{T2})_+$ is out of the money so that the first term
  vanishes. Furthermore, $0<a\leq bc$ implies that if
  $(\frac{a}{c}S_{T2}-bcS_{T1})_+$ is (strictly) positive, then
  $cS_{T1}<S_{T2}$, so that we can omit the indicator function in the
  second term. Thus, $X_{\mathrm{qsw}}$ can be hedged (exactly if the
  ratio process can not jump over the barrier) with a long position in
  the European derivative with payoff
  \begin{displaymath}
    S_{T3}(\frac{a}{c}S_{T2}-bcS_{T1})_+\big(c\frac{S_{T1}}{S_{T2}}\big)^{\alpha}\,.
  \end{displaymath}
  By the knock-in knock-out parity $Y_{\mathrm{qsw}}$ can be hedged by
  a short position in this derivative and a long position in the
  European derivative with payoff $S_{T3}(aS_{T1}-bS_{T2})_+$. In the
  weighted $12$-swap-invariant case $\alpha=0$ and the hedging
  instruments reduce to weighted quanto-swap options.
\end{example}

\begin{example}[Barrier swap options]
  \label{ex:barrier-swap}
  By specialising the claims $X_{\mathrm{qsw}}$ and
  $Y_{\mathrm{qsw}}$ to the cases where $S_{T3}=1$ (with other assumptions
  unchanged) we get weighted barrier swap (also known as Margrabe)
  options with knocking conditions, defined by the following claims
  \begin{align*}
    X_{\mathrm{sw}} &=(aS_{T1}-bS_{T2})_+\,\one_{\exists t\in[0,T],\, cS_{t1}\leq S_{t2}}\,,\\
    Y_{\mathrm{sw}} &=(aS_{T1}-bS_{T2})_+\one_{cS_{t1}>S_{t2}\forall
    t\in[0,T]}\,,
  \end{align*}
  where the assumptions on the parameters $a$, $b$, $c$ remain
  unchanged.  Assuming $(\eta_{t1},\eta_{t2})$ to be
  quasi-swap-invariant, the hedging portfolio for $X_{\mathrm{sw}}$
  consists in a long position in the European derivative with payoff
  $(\frac{a}{c}S_{T2}-bcS_{T1})_+(c\frac{S_{T1}}{S_{T2}})^{\alpha}$,
  while the hedge of $Y_{\mathrm{sw}}$ is given by a short position in
  this derivative and a long position in the European derivative with
  payoff $(aS_{T1}-bS_{T2})_+$. In the swap-invariant case all hedging
  instruments reduce to weighted swap-options, respectively weighted
  Margrabe options.
\end{example}

\begin{example}[Hedges based on the exchangeability property]
  \label{ex:hedg-on-ex}
  We end this section by discussing an example where we need more
  symmetry in the model than (weighted) swap-invariance in order to
  hedge some basket payoffs with barrier features on a ratio-process.
  Assume that the vector of asset prices
  from Example~\ref{ex:barrier-quanto-swap}
  $S_t=(S_{t1},S_{t2},S_{t3})$ is such that $(e^{\xi_{t1}},e^{\xi_{t2}},e^{\xi_{t3}})$
  is $12$-exchangeable for all $t\in
  [0,T]$, while the remaining assumptions are not changed.  Let the
  carrying costs $\lambda_1=\lambda_2$ be the same for the first and
  the second assets, e.g.\ both being the risk-free interest rate.
  Assume $cS_{01}>S_{02}$ and define the stopping time $\tau$ by
  (\ref{eq:tau-def}). Consider the claim
  \begin{displaymath}
     Y_{\mathrm{qsp}}=S_{T3}\big(aS_{T1}-bS_{T2}-k\big)_+
      \one_{cS_{t1}>S_{t2}\,\forall t\in [0,T]}\,,
  \end{displaymath}
  with positive weights $a,b,c$, $a\leq bc$ and non-negative strike
  $k$.  This option is knocked out if the ratio
  $\frac{S_{t2}}{S_{t1}}$ achieves or exceeds $c$.

  We again assume for a moment that jumps cannot cross the barrier,
  e.g.\ being the case in the Black--Scholes setting. By similar
  arguments as in Example~\ref{sec:gen-hedge}, we can hedge the claim
  $Y_{\mathrm{qsp}}$ by taking the following positions in the European
  weighted quanto-spread options
  \begin{align*}
    \text{long}\qquad &   S_{T3}\big(aS_{T1}-bS_{T2}-k\big)_+\,,\\
    \text{short}\qquad & S_{T3}\big(\frac{a}{c}S_{T2}-bcS_{T1}-k\big)_+\,.
  \end{align*}
  This also yields that the knock-in claim
  \begin{displaymath}
    X_{\mathrm{qsp}}=S_{T3}\big(aS_{T1}-bS_{T2}-k\big)_+
    \one_{\exists t\in[0,T],\, cS_{t1}\leq S_{t2}}\,,
  \end{displaymath}
  with the same parameters $a,b,c,k$, can be hedged with a long position
  in the European option given by the payoff function
  \begin{displaymath}
    S_{T3}\big(\frac{a}{c}S_{T2}-bcS_{T1}-k\big)_+\,.
  \end{displaymath}
  In case of jump processes the exchangeability implies that $e^{\xi_{t1}}$
  has non-problematic up (problematic down) jumps if and only if
  $e^{\xi_{t2}}$ has problematic up (non-problematic down) jumps, so that
  $cS_{\tau 1}$ is no longer almost surely equal $S_{\tau 2}$. This
  fact leads to a super-replication of knock-in options and a more
  problematic sub-replication of knock-out options.
\end{example}

\section*{Acknowledgements}

The authors are grateful to anonymous referees for careful reading
this paper and suggesting improvements.

\newcommand{\noopsort}[1]{} \newcommand{\printfirst}[2]{#1}
  \newcommand{\singleletter}[1]{#1} \newcommand{\switchargs}[2]{#2#1}

\end{document}